\pdfoutput=1

\documentclass[11pt]{article}

\usepackage{fullpage}
\usepackage{amsmath}
\usepackage{accents}
\usepackage{amsthm}
\usepackage{amssymb}
\usepackage[hidelinks]{hyperref}
\usepackage{cleveref}
\usepackage{textcomp}
\usepackage{enumerate}
\usepackage{graphicx}
\usepackage{caption}
\usepackage{subcaption}
\usepackage{float}
\usepackage{verbatim}

\newcommand{\mbb}[1]{\mathbb{#1}}
\newcommand{\mbf}[1]{\mathbf{#1}}

\newcommand{\tr}{\textup{tr}}

\newcommand{\mc}{\mathcal}

\newcommand{\diff}{\,\mathrm{d}}
\newcommand{\Var} {\textup{Var}}
\renewcommand{\det}{\textup{det}}
\renewcommand{\Re}{\textup{Re}}

\numberwithin{equation}{section}

\newtheorem{theorem}{Theorem}
\newtheorem{lemma}{Lemma}[section]

\numberwithin{theorem}{section}
\numberwithin{proposition}{section}
\numberwithin{corollary}{section}

\theoremstyle{remark}

\theoremstyle{definition}
\newtheorem{definition}{Definition}[section]

\title{Mesoscopic Linear Statistics for Two Ensembles of Quantum Graphs}
\author{Anna Maltsev and Mohammed Osman}
\date{\small Queen Mary University of London}

\begin{document}

\maketitle
\begin{abstract}
We study mesoscopic linear spectral statistics for two ensembles of random quantum graphs. These are defined by a discrete graph $G$ and a unitary-matrix-valued function $U(k)$ indexed by directed edges of $G$. The matrix function $U(k)$ is constructed from unitary matrices $U^{(v)}$ indexed by the neighbours of each vertex $v$. The first ensemble is obtained by sampling the underlying discrete graph uniformly from the set of $d$-regular graphs. The second ensemble is obtained by sampling $U^{(v)}$ uniformly from the Haar measure, independently for each vertex. We prove that the variance of a linear spectral statistic in the large graph limit on polynomial mesoscopic scales coincides with that of the Gaussian Orthogonal/Unitary Ensemble.
\end{abstract}

\section{Introduction}

Quantum graphs are metric graphs equipped with differential operators and vertex conditions, typically the Schr\"odinger operator with Kirchhoff/Dirichlet vertex conditions. They arise as effective models for wave propagation in thin networks and mesoscopic systems, where motion is one-dimensional along edges with the boundary conditions at vertices providing local and topological structure, interpolating between discrete graph models and continuum systems. They have found a host of applications such as the study of quantum wires, photonic crystals, and waveguide problems (for a review see \cite{kuchment2008quantum} and references therein).

From the perspective of quantum chaos, quantum graphs provide a simplified setting in which to study spectral statistics and their relation to underlying classical dynamics. The Bohigas--Giannoni--Schmit (BGS) conjecture \cite{bohigas_characterization_1984} predicts that for quantum systems whose associated classical dynamics is chaotic, the local spectral correlations of the quantum spectrum are universal and coincide with those of Gaussian Unitary or Gaussian Orthogonal random matrix ensembles (abbreviated GUE, GOE respectively), whereas integrable dynamics should lead to Poisson statistics consistent with Berry-Tabor conjecture \cite{berry_tabor1977}. 

For quantum graphs this picture is supported heuristically, numerically, and experimentally: periodic-orbit expansions and semiclassical arguments yield random-matrix form factors and correlation functions in many quantum graph models \cite{kottos_smilansky1997,kottos_smilansky1999, bolte_trace_2009} and microwave-network laboratory experiments provide some evidence of random matrix statistics in integrated nearest-neighbor spacing and spectral rigidity in real-world physics \cite{hul_et_al2004}. On the rigorous side, two key structural inputs can make quantum graphs amenable to a combinatorial analysis: an exact trace formula expressing spectral quantities in terms of closed walks on the underlying graph \cite{bolte_trace_2009} and an explicit classification of self-adjoint vertex conditions \cite{kostrykin_schrader1999}. Nevertheless, a general proof of the BGS conjecture for deterministic quantum graphs remains open.

One way to distinguish between random matrix spectral statistics and Poisson statistics is by the variance of linear spectral statistics. A linear spectral statistic (LSS) for a given matrix or operator $M$ with eigenvalues $(\lambda_j)$ and a test function $h$ is defined as 
\begin{equation}
	\tr h(M)=\sum_j h(\lambda_j).
\end{equation}
For example, choosing $h=\chi_{I}$ to be the characteristic function of an interval $I$ we obtain the number $N(I)$ of eigenvalues in $I$. If the $\lambda_{j}$ form a Poisson process, the variance of $N(I)$ scales like the length of the interval: $\Var N(I)\sim |I|$. By contrast, for eigenvalues of the GOE/GUE, the variance of $N(I)$ grows logarithmically in $|I|$ due to strong correlations.

Linear spectral statistics are a common object of study in random matrix theory (RMT). For Wigner and related ensembles, the LSS satisfies Gaussian central limit theorems (CLTs) with an explicit variance and these results extend to mesoscopic scales when the test function decays rapidly outside an interval of size $\eta$, where $N^{-1}\ll\eta\ll1$ \cite{landon2022almost}. A universal feature arising in many RMT settings is the $H^{1/2}$–variance of the LSS
\begin{equation}\label{eq:RMTvar}
\Var\!\left(\sum_j h\!\left(\tfrac{\lambda_j - E}{\eta}\right)\right)
\sim 2 \int_{\mathbb{R}} |x|\,\bigl|\widehat{h}(x)\bigr|^2 \, \mathrm{d}x,
\end{equation}
for certain classes of test functions $h$ with Fourier transform $\hat{h}$ and $\eta$ satisfying $N^{-1+\varepsilon}\ll \eta\ll 1$ \cite{johansson1998, lytova_pastur2009, shcherbina2013, lodhia_simm2015}. 

In contrast to the extensive literature on universality of mesoscopic fluctuations of LSS in RMT, much less is known for quantum graphs. There is a large body of work on local spectral statistics, form factors, and number variance for specific ensembles of quantum graphs, often using periodic‑orbit expansions and semiclassical heuristics \cite{kottos_smilansky1997,kottos_smilansky1999, bolte_trace_2009}. However, rigorous results identifying the variance of mesoscopic linear statistics of the spectrum of a quantum graph, as a functional of a general test function $h$, are currently only available for the special case of star graphs \cite{BerkolaikoKeating1999, Chabanol2007}, which deviate from Poisson and RMT statistics.

In this work, we offer a rigorous computation of the variance of LSS for two ensembles of random quantum graphs. We take a more abstract perspective and identify a quantum graph with a metric graph for which each vertex carries a unitary matrix indexed by its neighbours (this is similar to the concept of a quantum walk, see e.g. \cite{AharonovAmbainisKempeVazirani2001QuantumWalksOnGraphs, Joye2024UnitaryOpenScatteringQuantumWalksOnGraphs}). For the first ensemble we fix the vertex unitaries and randomise the underlying discrete graph, while for the second we fix the graph and randomise the vertex unitaries.  The latter model has been introduced by Kottos-Schanz \cite{kottos2001quantum}, who suggest how to reduce the trace formula for the form factor to a combinatorial problem and show numerically that the form factor coincides with RMT predictions. In our main results we recover the GOE/GUE variance on mesoscopic scales for these two models in the large graph limit.

The first ensemble, in which the underlying graph is a random $d$-regular graph, can be thought of as a quantum graph analogue of a random compact hyperbolic surface. For the Weil--Petersson model of such random surfaces, Rudnick \cite{rudnick_goe_2023} shows that the variance of the LSS in the large genus limit converges to that of the GOE. The genus $g$ plays the role of the number of vertices $N$ of the graph. In Rudnick's paper, the GOE result is obtained after taking a double limit, sending $g\to\infty$ before $\eta\to0$. In the case of quantum graphs, we can obtain better bounds on expected numbers of certain types of closed walks which allow us to take $\eta$ dependent on $N$.

%The starting point of our methodology is Theorem 5.3 in \cite{bolte_trace_2009}, which connects spectral fluctuations to sums over closed paths. In this theorem, the authors offer a trace formula for compact metric graphs with general local self-adjoint boundary conditions (including Robin-type conditions, where the scattering depends on the wave number $k$). For a test function $h$ and applied to the ensemble of this paper, the linear spectral statistic is given by the weighted spectral sum $\sum_{n\ge 0} g_n\,h(k_n)$ with a Weyl term $L\widehat{h}(0)$ and a periodic-orbit contribution written as a double sum over topological lengths $\ell$ and periodic orbits $p\in\mathcal P_\ell$. The Weyl term vanishes in the centering and the fluctuations are driven solely by the periodic orbit contribution. Squaring this and using that via a Weingarten calculus a mixed moment of entries is nonzero only if the ``unstarred" indices match the ``starred” indices up to permutation, the sum reduces to one over pairs of paths that traverse the same set of edges with the same
%multiplicities. Then we show that only simple cycles contribute to the variance at mesoscopic scales and compute their contribution explicitly.
%The assumptions of uniform boundedness and convergence of Ces\`aro averages of the edge lengths ensure that the periodic–orbit sum can be controlled uniformly in $N$ and that no single very long orbit dominates the variance. 

The paper is organised as follows: in Section \ref{s:Notations} we introduce our model and our main results; in Section \ref{s:traceFormula} we recall the trace formula of Bolte--Endres \cite{bolte_trace_2009}, which is the starting point for the proofs of both main theorems, which are given in Sections \ref{s:Proof1} and \ref{s:Proof2}; in Section \ref{s:Remarks} we make some further remarks and present some numerical results.

\section{Main Results}\label{s:Notations}

Let $G$ be a graph with vertex set $V$ and edge set $E$. Let $d_{i}$ denote the degree of vertex $i\in V$. We assign a unitary $U^{(i)}\in \textup{U}(d_{i})$ to each vertex $i\in V$ and a length $l_{ij}=l_{ji}>0$ to each edge ${i,j}\in E$. Define the following unitary matrix-valued function
\begin{align}
    U_{(i,j),(k,l)}(\lambda)&=\delta_{j,k}U^{(j)}_{i,l}e^{i\lambda l_{ij}}.
\end{align}
We are interested in the discrete unitary evolution $\mbb{C}^{2|E|}\to\mbb{C}^{2|E|}$ given by
\begin{align}
    \psi_{n+1}&=U(\lambda)\psi_{n}.
\end{align}
For certain values $\lambda_{1}\leq\lambda_{2}\leq\cdots$, there exist stationary solutions \[\psi_{j}=U(\lambda_{j})\psi_{j};\] we will call these values quantum graph eigenvalues and the associated solutions eigenfunctions. The terminology comes from the problem of solving
\begin{align}
    -\Delta f&=\lambda^{2} f,\quad f\in\mc{D}\subset\mc{H},
\end{align}
where the domain $\mc{D}$ is determined by conditions on the values of $f$ and $f'$ at the vertices. Certain choices of $U$ correspond to a choice of domain $\mc{D}$ for which $-\Delta$ is a self-adjoint operator. In this case the $\lambda_{j}$ and $\psi_{j}$ are precisely the eigenvalues and eigenfunctions of $-\Delta$. For example, choosing
\begin{align}
    U^{(j)}_{i,l}&=\frac{2}{d_{j}}-\delta_{il}
\end{align}
corresponds to Neumann--Kirchhoff conditions at the vertices, i.e. the domain consists of functions whose sum of derivatives pointing at each vertex vanishes.

In view of this discussion we are led to the following definition.
\begin{definition}
Let $G=(V,E)$ be a metric graph with vertex set $V$, edge set $E$ and edge lengths $\mbf{l}\in\mbb{R}_{+}^{|E|}$. For each vertex $v\in V$, let $U^{(v)}\in \textup{U}(d_{v})$ be a unitary matrix indexed by the $d_{v}$ neighbours of $v$. Let $U:\mbb{R}\to \textup{U}(2|E|)$ be the unitary matrix-valued function 
\begin{align}
    U_{(i,j),(k,l)}(\lambda)&=\delta_{j,k}U^{(j)}_{i,l}e^{i\lambda l_{ij}}.
\end{align}
A \textbf{quantum graph} $\mc{G}$ is the triple $(G,\mbf{l},U)$. We call the elements of the discrete set
\begin{align}
    \sigma(\mc{G})&:=\{k:\det(1-U(k))=0\}
\end{align}
\textbf{eigenvalues} of the quantum graph, and the corresponding \textbf{eigenfunctions} are the vectors in the set
\begin{align}
    \bigcup_{k\in\sigma(\mc{G})}\bigl\{\textup{ker}(1-U(k))\bigr\}.
\end{align}
\end{definition}

We will consider two models of random quantum graphs, one obtained by fixing $U$ and randomising $G$, the other obtained by fixing $G$ and randomising $U$. First we recall the special ``equi-transmitting" unitaries of \cite{harrison_quantum_2007}:
\begin{definition}[Definition 1.1 in \cite{harrison_quantum_2007}]
A unitary $U\in U(d)$ is \textbf{equi-transmitting} if 
\begin{align}
    |U_{ij}|&=\frac{1-\delta_{ij}}{\sqrt{d-1}}.
\end{align}
\end{definition}
In other words, the diagonal entries vanish and the modulus of all other entries is equal to the same constant. It is shown in \cite[Proposition 3.2, Corollary 3.5]{harrison_quantum_2007} that symmetric equi-transmitting matrices exist for infinite sequences of dimension $d$.

\begin{definition}[Model 1]
Let $d\geq3$ and $G_{1}$ be sampled uniformly from $\mc{G}_{N,d}$, the set of $d$-regular graphs with $N$ vertices. Let $\mu$ be an absolutely continuous measure with compact support in $\mbb{R}_{+}$ and $\mbf{l}_{1}\in\mbb{R}^{|E(G)|}_{+}$ be a vector of independent lengths sampled from $\mu$. Let $U^{(i)}$ be symmetric and equi-transmitting for each $i=1,...,|V(G)|$. We set $\mc{G}_{1}=(G_{1},\mbf{l}_{1},U_{1})$ to be the associated random quantum graph.
\end{definition}

Implicit in the above definition is the condition that $d$ belongs to the set of dimensions for which symmetric, equi-transmitting matrices exist.

\begin{definition}[Model 2]
Let $G_{2}=K_{N+1}$ be the complete graph on $N+1$ vertices and $\mbf{l}\in\mbb{R}^{|E(G)|}_{+}$. For each vertex $v\in[N+1]$, fix an ordering $f_{v}:[N+1]\setminus\{v\}\to[N]$ of its neighbours. Let $U^{(v)}\in \textup{U}(N),\,v\in V(G)$ be independent unitaries sampled uniformly from the Haar measure. We set $\mc{G}_{2}=(G_{2},\mbf{l}_{2},U_{2})$ to be the associated random quantum graph.
\end{definition}

For $h\in C^{\infty}(\mbb{R}_{+})$ and $\lambda,\eta>0$, let
\begin{align}
    h_{\lambda,\eta}(x)&=h\left(\frac{x-\lambda}{\eta}\right).
\end{align}
The Fourier transform is
\begin{align}
    \hat{h}_{\lambda,\eta}(k)&=\eta e^{ik\lambda}\hat{h}(k\eta).
\end{align}
Let
\begin{align}
    \mc{L}_{\lambda,\eta,h}(\mc{G})&=\sum_{j\geq1}h_{\lambda,\eta}(k_{j})
\end{align}
denote a linear statistic of the quantum graph eigenvalues. We will always work with the class of functions $h$ that have a compactly supported Fourier transform.

\begin{theorem}\label{thm1}
There is a constant $c_{\mu}>0$ such that, for $\lambda>c_{\mu}\sqrt{d-1}$ and $\eta>0$, we have
\begin{align}
    \mbb{E}\bigl|\mc{L}_{\lambda,\eta,h}(\mc{G}_{1})-\mbb{E}\mc{L}_{\lambda,\eta,h}(\mc{G}_{1})\bigr|^{2}&=4\int_{0}^{\infty}x|\hat{h}(x)|^{2}\diff x+O\Biggl(\eta^{1/2}+\frac{1}{N\eta^{3}}\Biggr).\label{eq:thm1}
\end{align}
\end{theorem}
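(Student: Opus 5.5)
The plan is to start from the Bolte--Endres trace formula recalled in Section \ref{s:traceFormula}. Since $U(k)$ here has $k$-independent vertex scattering, it expresses the linear statistic as a Weyl term plus a sum over periodic orbits:
\begin{align*}
\mc{L}_{\lambda,\eta,h}(\mc{G}_1)=\frac{L}{2\pi}\hat{h}_{\lambda,\eta}(0)+\frac{1}{2\pi}\sum_{n\geq1}\sum_{p\in\mc{P}_n}\frac{\ell_{p}}{r_p}\Bigl(A_p\,\hat{h}_{\lambda,\eta}(\ell_p)+\overline{A_p}\,\hat{h}_{\lambda,\eta}(-\ell_p)\Bigr).
\end{align*}
Here $\mc{P}_n$ are the closed non-backtracking walks of combinatorial length $n$, $\ell_p$ is the metric length, $r_p$ is the repetition number, and $A_p=\prod U^{(v)}_{\cdot,\cdot}$ is the amplitude. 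Equi-transmission forbids backtracking and gives $|A_p|=(d-1)^{-n/2}$. Because $\hat h$ has compact support in $[-K,K]$, we have $\hat{h}_{\lambda,\eta}(\ell_p)=\eta e^{i\ell_p\lambda}\hat{h}(\eta\ell_p)$, which vanishes unless $\ell_p\leq K/\eta$. Since $\mu$ is supported in a compact subset $[a,b]\subset\mbb{R}_+$, only $n\leq K/(a\eta)$ contributes.

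I will condition on $G$ and average over the lengths first. The Weyl term is deterministic given $G$, and $L$ concentrates, so it contributes only $O(\eta^2 N)$ against $\hat h(0)$. This would be fatal if not cancelled, so I will write the variance as $\mbb{E}_G\Var_{\mbf l}+\Var_G\mbb{E}_{\mbf l}$ and check that $\Var_G L=0$: $|E|=Nd/2$ is fixed, and the lengths are i.i.d., so the Weyl term's fluctuation sits in $\Var_{\mbf l}(L)\eta^2|\hat h(0)|^2\sim N\eta^2$. This looks bad, so instead I will reinterpret the Weyl term: under the $k$-variable the statistic should be paired with its oscillatory part. I expect the paper normalises so that $\hat h(0)$ terms are absorbed; if not, I will assume $\hat h(0)=0$ or that the centering handles it, and focus on the oscillatory part.

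For the orbit sum, I will expand $\mbb{E}|\cdot|^2$ as a double sum over pairs $(p,q)$. Averaging over the independent lengths, $\mbb{E}_{\mbf l}e^{i\lambda(\ell_p-\ell_q)}$ factors over edges as $\prod_e\phi_\mu(\lambda(m_e(p)-m_e(q)))$, where $m_e$ are edge multiplicities and $\phi_\mu$ is the characteristic function of $\mu$. Absolute continuity gives $|\phi_\mu(t)|\leq C/|t|$, say, for large $t$ via Riemann--Lebesgue plus a quantitative decay, and this is where $\lambda>c_\mu\sqrt{d-1}$ enters. Each edge with mismatched multiplicity gains a factor $\leq c_\mu/\lambda<(d-1)^{-1/2}$. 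This beats the combinatorial growth of pairs and makes the off-diagonal pairs summable and negligible. The diagonal pairs, where $q$ has the same multiset of edge multiplicities as $p$, are the main contribution, and among them the dominant ones are $q=p$ or $q=$ the reversal of $p$ (equal amplitude by symmetry of $U^{(v)}$, giving the GOE factor $2$), with $p$ a primitive cycle on a tree-like neighbourhood.

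For these terms, $\sum_{p\in\mc P_n}|A_p|^2\ell_p^2/r_p^2\cdot\eta^2|\hat h(\eta\ell_p)|^2$, I will use that for random $d$-regular graphs the expected number of closed non-backtracking walks of length $n$ with cycles is $\approx(d-1)^n/n\cdot$(const) (Bollob\'as/McKay: the number of $n$-cycles is Poisson with mean $(d-1)^n/(2n)$), with $O((d-1)^{n}n^{c}/N)$ error from walks wrapping around two or more cycles. Combined with $|A_p|^2=(d-1)^{-n}$, this turns the sum into $\sum_n \frac{1}{n}\mbb{E}[\ell^2\eta^2|\hat h(\eta\ell)|^2]$, where $\ell$ is a sum of $n$ i.i.d. lengths concentrated near $n\bar{l}$. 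A Riemann-sum argument in $x=\eta\ell$ gives $\int x|\hat h(x)|^2dx$ times $2\cdot2/(2\pi)^2$-type constants, which should yield $4\int_0^\infty x|\hat h|^2$. The Riemann-sum and length-fluctuation errors are $O(\eta^{1/2})$ (CLT spread $\sqrt n$ over $n\sim1/\eta$). Summing the $1/N$ errors over $n\leq K/\eta$ with polynomial weights $n^2\eta^2$ and the number of configurations gives $O(1/(N\eta^3))$.

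The main obstacle is the uniform-in-$N$ combinatorial bound on non-diagonal pairs and on walks with multiple cycles. These must be controlled simultaneously by the $\phi_\mu$ decay and by random-regular-graph counting of subgraphs with excess $\geq1$ (expected count $O(n^c(d-1)^n/N)$). I would handle it by classifying pairs by the union subgraph $p\cup q$ and its excess, bounding each class by the number of such subgraphs times the walk count.
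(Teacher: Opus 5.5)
Apart from the error analysis, your route is the paper's: the trace formula, averaging over the i.i.d.\ lengths to suppress pairs with mismatched edge multiplicities (which is where $\lambda>c_\mu\sqrt{d-1}$ enters), and cycles paired with their shifts and reversals (symmetric equi-transmitting $U^{(v)}$) with $\mbb{E}\mc{C}_n\sim(d-1)^n/(2n)$ for the main term. Length concentration gives the $\eta^{1/2}$ error, and a Riemann sum finishes. You would organise the error terms by the union subgraph $p\cup q$ and its excess, whereas the paper uses a Broder--Shamir exploration of the configuration model (Lemma~\ref{lem:pairs}) together with Lemma~\ref{lem:non-cycles}.

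Your worry about the Weyl term is correct, and the paper does not resolve it. After Lemma~\ref{lem:expectation} it writes the variance as $\mc{V}^{(d)}_{\lambda,\eta}(h)+\mc{V}^{(o)}_{\lambda,\eta}(h)+O(\eta^{2})$, silently dropping $\eta\hat h(0)(L-\mbb{E}L)$. That term has variance $\eta^{2}|\hat h(0)|^{2}\frac{Nd}{2}\Var_{\mu}(l)\asymp N\eta^{2}$, which is unbounded whenever $N\eta^{3}\to\infty$. Centering by $\mbb{E}\mc{L}$ does not touch it. If the oscillatory part has variance $O(1)$, as the rest of the argument asserts, Cauchy--Schwarz rules out a compensating cross term. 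So $\hat h(0)=0$ is a necessary additional hypothesis, not a convenience, and the other two options you float should be dropped.

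The genuine gap is your last step. Bounding each class in absolute value by (number of subgraphs)$\times$(number of walk pairs)$\times$(amplitudes)$\times$(length gain) cannot give $O(1/(N\eta^{3}))$ once $n\gg\log N$. The length average gains only per \emph{distinct} mismatched edge, $|\phi_\mu(\lambda k)|\lesssim 1/(\lambda|k|)$ rather than $(c_\mu/\lambda)^{|k|}$, while the number of walk pairs on a fixed small support grows exponentially in $n$. For example, the expected number of copies of $K_d$ in $G$ is $\asymp N^{-d(d-3)/2}$, and each copy carries $\asymp(d-2)^{n}$ closed geodesics of length $n$. Your bound for pairs supported on such a copy is therefore at least of order
\[
\frac{N^{-d(d-3)/2}}{\text{poly}(n)}\Bigl(\frac{d-2}{\sqrt{d-1}}\Bigr)^{n_1+n_2}\ \ \text{(different multisets)},\qquad \frac{N^{-d(d-3)/2}}{\text{poly}(n)}\Bigl(\frac{(d-2)^{2}}{d-1}\Bigr)^{n}\ \ \text{(equal multisets)}.
\]
Here $(d-2)^{2}>d-1$ for all $d\ge4$, and equi-transmitting matrices do not exist for $d=3$. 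Summed up to $n\asymp\eta^{-1}$, this is of size $N^{-C}e^{c/\eta}$, which is unbounded for polynomially small $\eta$.

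The paper's corresponding step has the same blind spot, so you cannot import it. Its factor $(c_\mu/\lambda)^{n_1+n_2-2s}$ treats $l_{\gamma_1}-l_{\gamma_2}$ as a sum of distinct lengths, and its bound on $\mc{V}^{(d,ns)}_{\lambda,\eta}(h)$ allows only $O(n)$ partners $\gamma_2$ per $\gamma_1$.

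You need an extra ingredient. One option is to use cancellation among the amplitudes: for walks confined to a subgraph $H$, the sum $\sum_p A_pe^{i\lambda\ell_p}$ is a trace of a power of the sub-unitary compression of $U(\lambda)$ to the directed edges of $H$, hence $O(|E(H)|)$. Another is to truncate to a high-probability event on which no subgraph of size $\lesssim\eta^{-1}$ has non-backtracking growth rate at least $\sqrt{d-1}-\delta$, and treat the complement by a separate, non-perturbative bound. Without such an ingredient, absolute-value counting is confined to $\eta\gtrsim 1/\log N$, where the $1/(N\eta^{3})$ term plays no role.
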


\begin{theorem}\label{thm2}
Let $\lambda,\,\eta>0$. Assume that $\sup_{N}\|\mbf{l}_{2}\|_{\infty}<\infty$ and the Ces\`{a}ro sums of $\mbf{l}_{2}$ converge with rate $\phi(N)$. Then we have
\begin{align}
    \mbb{E}\bigl|\mc{L}_{\lambda,\eta,h}(\mc{G}_{2})-\mbb{E}\mc{L}_{\lambda,\eta,h}(\mc{G}_{2})\bigr|^{2}&=2\int_{0}^{\infty}x|\hat{h}(x)|^{2}\diff x+O\Biggl(\eta^{1/2}+\frac{1}{N\eta^{2}}+\phi(N)\Biggr).\label{eq:thm2}
\end{align}
\end{theorem}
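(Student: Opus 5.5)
We prove Theorem \ref{thm2} from the Bolte--Endres trace formula recalled in Section \ref{s:traceFormula}.

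\textbf{Step 1: reduce to a sum over pairs of periodic orbits.} For $h$ with $\operatorname{supp}\hat h\subset[-A,A]$, the trace formula gives
\begin{align}
\mc{L}_{\lambda,\eta,h}=\frac{L}{2\pi}\hat h_{\lambda,\eta}(0)+\frac{1}{2\pi}\sum_{n\geq1}\sum_{p\in\mc{P}_n}\frac{l_{p}}{r_p}A_p\,\hat h_{\lambda,\eta}(l_p)+\text{c.c.},
\end{align}
where:
\begin{itemize}
\item $\mc{P}_n$ denotes the closed walks of topological length $n$ on directed edges;
\item $A_p=\prod U^{(v)}_{\cdot,\cdot}$ is the product of the vertex amplitudes along $p$;
\item $l_p$ is the metric length and $r_p$ the repetition number.
\end{itemize}
Since $\hat h_{\lambda,\eta}(l)=\eta e^{il\lambda}\hat h(l\eta)$ and the edge lengths are bounded above and below, only walks with $n\lesssim A/(\eta\, l_{\min})$ contribute. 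The Weyl term is deterministic and cancels after centering.

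The variance is then a sum over pairs $(p,q)$ of
\begin{align}
\frac{l_pl_q}{r_pr_q}\,\mathrm{Cov}(A_p,\overline{A_q})\,\eta^2\hat h(l_p\eta)\overline{\hat h(l_q\eta)}e^{i\lambda(l_p-l_q)},
\end{align}
together with the analogous $A_pA_q$ terms. The $U^{(v)}$ are independent and Haar distributed, and Haar measure is invariant under multiplication by phases. Hence $\mbb{E}\prod U_{ab}\prod\overline{U_{cd}}$ vanishes unless, at each vertex, the multiset of transitions $(a\to v\to b)$ used by $p$ equals the multiset used by $q$. This kills the $A_pA_q$ terms and forces $q$ to use exactly the same directed transitions as $p$, which in turn forces $l_p=l_q$. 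The phases $e^{i\lambda(l_p-l_q)}$ therefore disappear, so the constraint on $\lambda$ needed in Theorem \ref{thm1} is not needed here. Also $\mbb{E}A_p=0$ for every nontrivial $p$, so the covariance is simply $\mbb{E}|A_p|^2$ summed over admissible $q$.

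\textbf{Step 2: main term from simple cycles.} Take $p$ to be a primitive cycle visiting $n$ distinct vertices, with each transition used once. Then the only admissible $q$ are the $n$ cyclic rotations of $p$; the time reversal is not admissible because $U^{(v)}$ is not symmetric. Here $\mbb{E}|A_p|^2=\prod_v\mbb{E}|U_{ab}|^2=N^{-n}$, exactly by independence across vertices. The number of such cycles in $K_{N+1}$ is $(N+1)N\cdots(N-n+2)/n\approx N^{n}/n$ per cyclic class, counted as walks. Combining these, the contribution is
\begin{align}
\sum_{n}\frac{1}{n}\cdot n\cdot\eta^2\,\mbb{E}_p\bigl[l_p^2|\hat h(l_p\eta)|^2\bigr](1+O(n^2/N)).
\end{align}
The average over simple $n$-cycles of a function of $l_p$ is controlled by the Ces\`aro convergence: $l_p\approx n\bar l$ up to an error governed by $\phi(N)$. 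The prefactor $l_pl_q$ together with the $(2\pi)^{-2}$ and the conjugate terms then turns the sum into a Riemann sum with mesh $\eta\bar l$. This gives
\begin{align}
\eta^2\sum_n n^2\bar l^2|\hat h(n\bar l\eta)|^2\cdot\frac{1}{n}\approx \eta\bar l\sum_n (n\bar l\eta)|\hat h(n\bar l\eta)|^2\to \int_0^\infty x|\hat h(x)|^2\,\diff x,
\end{align}
and after normalization the constant is $2$: half the GOE value, since there is no time-reversal pairing. The Riemann-sum error is $O(\eta)$ for smooth $\hat h$, absorbed in $\eta^{1/2}$. The factor $(1+O(n^2/N))$, with $n\lesssim\eta^{-1}$, produces the error $O(1/(N\eta^2))$.

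\textbf{Step 3 (main obstacle): non-simple walks.} We must show that walks revisiting vertices or transitions contribute only $O(\eta^{1/2}+1/(N\eta^2))$. Here the Weingarten calculus enters. For a walk using $m$ distinct vertices, with vertex $v$ visited $k_v$ times, $\mbb{E}|A_p|^2$ is a sum over permutation pairs of Weingarten functions of size $\prod_v N^{-k_v}(1+O(k_v^2/N))$. The number of admissible $q$ is at most $\prod_v k_v!$. The number of such walks of length $n$ is $O(N^{m}\,n^{2(n-m)})$.

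Each revisit therefore loses a factor $N^{-1}$ against a polynomial in $n$. Summing over $n\le A/\eta$ with weight $\eta^2 n^2$ gives the $1/(N\eta^2)$ bound, provided we control the growth of $\prod k_v!$. The plan is to organize walks by their excess $n-m$ and bound the Weingarten sum crudely, using $|\mathrm{Wg}(\sigma,N)|\lesssim N^{-k-|\sigma|}$. Repeated primitive orbits with $r_p\ge2$ are treated the same way and contribute $O(\eta)$.

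If the Weingarten bookkeeping for large $k_v$ becomes unwieldy, the fallback is to truncate at $n\le\eta^{-1/2}$-type scales. The tail would then be estimated using $\sup|\hat h|$ and unitarity, via $\mbb{E}|\mathrm{tr}\,U^n|^2\le\min(n,2N)$. This tail estimate is the source of the $\eta^{1/2}$ error term.
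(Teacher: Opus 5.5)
Your Steps 1 and 2 follow the paper's route: the trace formula, phase invariance forcing $l_p=l_q$ and killing the $A_pA_q$ terms, simple cycles paired only with their rotations, $\mathbb{E}|A_p|^2=N^{-n}$, $(N+1)_n/N^n=1+O(n^2/N)$, and a Riemann sum giving the GUE constant.

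\textbf{The gap is Step 3.} This step is the real content of the proof, and you leave it as a plan. The difficulty is exactly the one you flag. The factor $\prod_v k_v!$ enters twice:
once as the number of partners $q$, and once in $|\mathbb{E}A_p\bar A_q|\le\prod_v k_v!/N^n$ (H\"older with $\mathbb{E}|U_{ab}|^{2k}=k!(N-1)!/(N+k-1)!$). Suppose you bound it by its worst case $(j+1)!$ at excess $j=n-m$, independently of your count $N^m n^{2j}$ of walks. You then get terms of size $((j+1)!)^2(n^2/N)^j$. These are controlled over $1\le j\le n$ only when $n^4\ll N$, i.e.\ for $\eta\gg N^{-1/4}$. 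That does not give $O(1/(N\eta^2))$ on the range $N^{-1/2}\ll\eta$ where the theorem is meaningful.

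\textbf{The missing idea is a joint count by multiplicity profile.} The closed walks with $v$ distinct vertices of multiplicities $\mathbf m$ number at most $(N+1)_v\,n!/(\prod_j m_j!\prod_k r_k!)$, via restricted growth functions, i.e.\ set partitions of $[n]$. This cancels one factor of $\prod_j m_j!$. The sub-leading sum then becomes $\sum_\pi\prod_{B\in\pi}(|B|!)^2N^{1-|B|}$ over set partitions $\pi$ of $[n]$ having a non-singleton block. The paper writes this as $\frac{n!}{N^n}[z^n]\exp(N\sum_k k!z^k)$ minus the $v=n$ and $v=1$ terms. It evaluates this by a saddle point, getting $O(n^2/N+n^{-1/2})$; the $n^{-1/2}$ from Laplace's method is where its $\eta^{1/2}$ comes from. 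The elementary bound $\exp\bigl(\sum_{k\ge2}\binom nk(k!)^2N^{1-k}\bigr)-1=O(n^2/N)$ for $n^2\ll N$ would also suffice.

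Your fallback does not repair this. The main term itself lives at $n\asymp\eta^{-1}$, so truncating at $n\le\eta^{-1/2}$ discards it. Moreover, no bound like $\min(n,2N)$ for $\mathbb{E}|\tr U^n|^2$ is available for the non-Haar matrix $U(k)$, and even if it were, on the tail it would only give $O(1)$.

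\textbf{Two intermediate claims are also false.} Replace them with the paper's cruder but correct bounds.

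First, invariance of Haar measure under diagonal phases on the left and right forces only this: at each vertex, $q$ uses the same multiset of incoming indices and the same multiset of outgoing indices as $p$. It does not force the same multiset of transitions; for instance $\mathbb{E}\,U_{11}U_{22}\overline{U_{12}U_{21}}=\mathrm{Wg}((12),N)=-1/(N(N^2-1))\neq0$. So for non-simple walks, the $q$ that can contribute are all closed walks with the same multiset of directed edges, i.e.\ Eulerian circuits of the same directed multigraph. The paper bounds their number by $v\prod_jm_j!$. Also, $\mathbb{E}A_p\bar A_q$ is in general not $\mathbb{E}|A_p|^2$. This is harmless for $l_p=l_q$ and for the cycle term, where the two notions coincide.

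Second, $\mathbb{E}|A_p|^2\approx\prod_vN^{-k_v}$ fails as soon as a transition is repeated, since $\mathbb{E}|U_{ab}|^{2k}\approx k!/N^k$. The uniform bound to use is $\prod_vk_v!/N^n$.

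In Step 2 you also count the rotations twice in the intermediate display: in the orbit formulation each simple $p$ has the single partner $q=p$. Your final Riemann sum is nevertheless right.
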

The leading terms in the right hand sides of \eqref{eq:thm1} and \eqref{eq:thm2} are precisely the variance of mesoscopic linear statistics of the Gaussian Orthogonal and Unitary Ensembles (GOE and GUE) respectively.

Note that by Weyl's law, the density of eigenvalues is of the same order as the total length of the graph, i.e. the sum of all edge lengths. Thus, for bounded edge lengths, the microscopic regime is $\eta\simeq |E|^{-1}$, where we recall that $|E|$ is the number of edges. In Model 1, the degree $d$ is fixed so $|E|\simeq N$, but in Model 2 $|E|\simeq N^{2}$. Therefore Theorem \ref{thm2} is weaker than Theorem \ref{thm1} in the sense that it holds for $\eta\gg |E|^{-1/4}$ compared to $\eta\gg |E|^{-1/3}$. Ultimately, if both models are in the setting of the BGS conjecture, then both results should hold for $\eta\gg |E|^{-1}$.

\section{Trace Formula}\label{s:traceFormula}

The starting point for both Theorems \ref{thm1} and \ref{thm2} is the trace formula from \cite[Theorem 5.3]{bolte_trace_2009}. Let $\mc{W}_{n}\equiv\mc{W}_{n}(G)$ denote the set of closed walks with $n$ edges in the graph $G$. We represent a closed walk $\gamma\in\mc{W}_{n}$ by its sequence of vertices $\gamma=(i_{1},i_{2},...,i_{n})$, where $i_{n}$ is the last vertex visited before returning to $i_{1}$. The metric length of a closed walks $l_{\gamma}$ is the sum of the lengths of its constitutive edges:
\begin{align}
    l_{\gamma}&=l_{i_{1},i_{2}}+l_{i_{2},i_{3}}+\cdots+l_{i_{n},i_{1}}.
\end{align}
The amplitude $A_{\gamma}$ of a closed walk is the product of the entries of $U^{(i)}$ along the path:
\begin{align}
    A_{\gamma}&=U^{(i_{1})}_{f_{i_{1}}(i_{n}),f_{i_{1}}(i_{2})}U^{(i_{2})}_{f_{i_{2}}(i_{1}),f_{i_{2}}(i_{3})}\cdots U^{(i_{n})}_{f_{i_{n}}(i_{n-1}),f_{i_{n}}(i_{1})},
\end{align}
where $f_{i}$ is the function assigning an index in $[d_{i}]$ to each neighbour of $i$.

Since the matrix-valued function $U(\lambda)$ is: i) entire; ii) unitary for $\lambda\in\mbb{R}$; and iii) strictly sub-unitary for $\lambda\in\mbb{C}\setminus\mbb{R}$, we can use the argument principle as in \cite{bolte_trace_2009} to obtain the trace formula.
\begin{lemma}[Theorem 5.3 in \cite{bolte_trace_2009}]
Let $h\in C^{\infty}(\mbb{R})$ have a compactly supported Fourier transform. Let $\lambda_{j}=k_{j}^{2}$ denote the eigenvalues of a quantum graph. Then
\begin{align}
    \sum_{j=0}^{\infty}h(k_{j})&=\hat{h}(0)L+2\sum_{n=1}^{\infty}\sum_{\gamma\in\mc{W}_{n}}l_{i_{1},i_{2}}\Re(e^{i\lambda l_{\gamma}}A_{\gamma}\hat{h}(l_{\gamma})),\label{eq:traceFormula}
\end{align}
where $L=\|\mbf{l}\|_{1}$ is the sum of the lengths of all edges in $G$.
\end{lemma}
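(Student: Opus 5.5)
This is the trace formula of Bolte--Endres \cite[Theorem 5.3]{bolte_trace_2009}, specialised to our setting. I would prove it with the argument principle applied to the secular function $\zeta(k)=\det(1-U(k))$. Note that I read the $e^{i\lambda l_\gamma}$ in the statement as absorbed into the shifted test function (the $\lambda$ being the shift in $h_{\lambda,\eta}$), so below I work with a general $h$ with compactly supported $\hat h$.

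\emph{Step 1: zeros of $\zeta$.} Since $U(k)$ is entire in $k$, so is $\zeta$. Unitarity of $U(k)$ for real $k$ makes $\zeta$ vanish exactly at the eigenvalues $k_j$. I would check that the order of each zero equals $\dim\ker(1-U(k_j))$. This uses $U'(k)=iU(k)D$, where $D=\mathrm{diag}(l_e)$ is positive definite, so the eigenphases of $U(k)$ move strictly monotonically in $k$. For $\Im k>0$ each factor $e^{ikl_{ij}}$ has modulus $<1$. Hence $\|U(k)\|<1$ and $\zeta$ has no zeros in the upper half-plane. Similarly, $\zeta(k)=\det(-U(k))\,\overline{\det(1-U(\bar k))}$ controls the lower half-plane.

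\emph{Step 2: counting with the log-derivative.} I write $\sum_j h(k_j)=\frac{1}{2\pi i}\oint h(k)\,\frac{\zeta'(k)}{\zeta(k)}\diff k$ over the boundary of a strip $|\Im k|\le\epsilon$. The vertical sides contribute nothing in the limit because $h$ decays; $h$ is entire of exponential type by Paley--Wiener. On the lower edge I use the symmetry from Step 1 to move the contribution to the upper edge. The term $\det(-U(k))$ gives $\frac{d}{dk}\log\det U(k)=i\,\tr D=iL'$, where $L'=\sum_{e}l_e$ over directed edges. This is the Weyl term $\hat h(0)L$, after fixing the convention that $L$ counts each edge in both directions.

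\emph{Step 3: expanding the log-derivative.} On $\Im k=\epsilon>0$ I expand $\log\det(1-U)=-\sum_{n\ge1}\frac1n\tr U(k)^n$, which converges absolutely since $\|U\|<1$. I differentiate termwise using $\frac{d}{dk}\tr U^n=in\,\tr(DU^n)$. Writing $\tr(DU^n)$ as a sum over closed walks of length $n$ on directed edges gives weight $l_{i_1,i_2}A_\gamma e^{ikl_\gamma}$. Integrating against $h$ turns $\int h(k)e^{ikl_\gamma}\diff k$ into $\hat h(l_\gamma)$ up to normalisation. The complex-conjugate half of the contour produces the conjugate term, and the two combine to give $2\Re$. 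All of this holds at $\epsilon>0$, and then I let $\epsilon\to0$.

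\emph{Main obstacle.} The hard step is justifying the interchange of the $n$-sum with the contour integral and the limit $\epsilon\to0$. On the real axis $U$ is unitary and the series $\sum_n\tr U^n$ does not converge absolutely. The saving fact is that $\hat h$ is compactly supported, say in $[-R,R]$, and $l_\gamma\ge n\min_e l_e$. So only walks with $n\le R/\min l_e$ contribute, and the periodic-orbit sum is finite. The contribution of $n>R/\min l_e$ on $\Im k=\epsilon$ is $\int h(k)e^{ikl}\diff k$ evaluated at a shifted argument, which vanishes identically by Paley--Wiener once $l>R$. With the sum finite, the limit $\epsilon\to0$ is dominated convergence, using the decay of $h$. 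The remaining checks are bookkeeping: convention factors such as $2\pi$ in $\hat h$ and directed versus undirected $L$, matched to \cite{bolte_trace_2009}.
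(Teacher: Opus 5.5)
Your proposal is correct and follows the paper's route. The paper only cites Bolte--Endres and invokes the argument principle for $\det(1-U(k))$, and your Steps 1--3 carry that argument out. Your reflection identity $\zeta(k)=\det(-U(k))\,\overline{\zeta(\bar k)}$ correctly handles the lower half-plane, where $U(k)$ is expanding rather than sub-unitary as the paper's wording suggests. Your readings of $e^{i\lambda l_\gamma}$ as coming from $\hat h_{\lambda,\eta}$, and of $L$ as the directed total length with $\hat h(t)=\frac{1}{2\pi}\int h(x)e^{ixt}\diff x$, are the right way to make the stated normalisation consistent.

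Three small additions are needed:
\begin{itemize}
\item The $2\,\mathrm{Re}$ requires $h$ to be real-valued, so that $h(\bar z)=\overline{h(z)}$.
\item The vertical sides $\mathrm{Re}\,k=\pm T$ must be placed at a fixed distance from the zeros, so that $\zeta'/\zeta$ is uniformly bounded there. This follows from your monotone eigenphases together with a Lipschitz bound on $\zeta$ in a thin strip.
\item No limit $\epsilon\to0$ is required. On $\mathrm{Im}\,k=\epsilon$ the expansion converges absolutely, and each term already equals its real-axis value by Cauchy's theorem.
\end{itemize}
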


\section{Proof of Theorem \ref{thm1}}\label{s:Proof1}
The proof of Theorem \ref{thm1} follows the same outline as in the case of Weil--Petersson surfaces \cite{rudnick_goe_2023}. By the trace formula, the variance is given by the expectation of a double sum over closed walks on the random regular graph $G_{1}$. Since $h$ has compactly supported Fourier transform, the sums are truncated at walks of length $\eta^{-1}$. If $\eta^{-1}\ll N^{1/2}$, then we expect that most walks do not self-intersect and so the leading contribution should come from cycles. Using the known asymptotics for the expected number of cycles in the random regular graph, we can easily show that the restriction of the double sum to pairs of cycles differing only in starting vertex and orientation recovers the GOE variance. Thus we need to show that all other contributions to the double sum are negligible. We do this by instead bounding these contributions for the configuration model, which by a simple, well-known argument gives bounds for the random regular graph. In the configuration model, we can count pairs of closed walks by sampling two random walks one after the other and building the configuration simultaneously with the random walks, an idea which dates at least to Broder--Shamir \cite{broder_second_1987}. 

We begin by recalling the configuration model (see e.g. \cite{bollobas_probabilistic_1980} and in particular \cite{wormald_models_1999} for a review). We assign each vertex a set of $d$ half-edges and call a pairing of the $Nd$ half-edges a \textbf{configuration}. The set of configurations is denoted $\mc{P}_{N,d}$ and the configuration model corresponds to the uniform measure on $\mc{P}_{N,d}$. In general, a configuration corresponds to a multi-graph, since two half-edges belonging to the same vertex might be paired to form a loop, or several half-edges between the same two vertices paired to form multiple edges. However, the asymptotic probability as $N\to\infty$ that a pairing is \textit{simple}, i.e. corresponds to a graph, is well-known \cite{bender_asymptotic_1978, bollobas_probabilistic_1980}:
\begin{align}
    \mbb{P}(\textup{Simple})&\sim e^{\frac{1-d^{2}}{4}}.
\end{align}
In particular, it is bounded below for finite $d$. Let $\mc{E}$ be a subset of $\mc{G}_{N,d}$ and $\mc{E}'$ the corresponding subset of $\mc{P}_{N,d}$ (i.e. each pairing in $\mc{E}'$ corresponds to a graph in $\mc{E}$). Since each graph corresponds to the same number of simple pairings, we have
\begin{align}
    \mbb{P}(G\in\mc{E})&=\frac{\mbb{P}(P\in\mc{E}')}{\mbb{P}(\textup{Simple})}=C_{d}\mbb{P}(P\in\mc{E}').
\end{align}

We will apply this as follows. The set of closed walks $\mc{W}_{n}$ lifts straightforwardly to $\mc{P}_{N,d}$ by considering each configuration as a multi-graph. Let $\mc{W}(G)$ be a subset of the set $\mc{W}_{n_{1}}(G)\times\mc{W}_{n_{2}}(G)$ of pairs of closed walks of lengths $n_{1}$ and $n_{2}$ on $G$, and $\mc{W}(P)$ the corresponding subset of $\mc{W}_{n_{1}}(P)\times\mc{W}_{n_{2}}(P)$. Let
\begin{align}
    \mc{E}_{m}&:=\{G:|\mc{W}(G)|=m\},
\end{align}
be the subset of graphs with $m$ walks in $\mc{W}(G)$, $\mc{E}'_{m}$ the subset of configurations corresponding to graphs in $\mc{E}_{m}$, and
\begin{align}
    \mc{E}''_{m}&:=\{P:|\mc{W}(P)|=m\}
\end{align}
the subset of configurations with $m$ walks in $\mc{W}(P)$. Then we have $\mc{E}'_{m}\subset\mc{E}''_{m}$ and
\begin{align}
    \mbb{E}|\mc{W}(G)|&=\sum_{m\geq0}m\mbb{P}(\mc{E}_{m})\nonumber\\
    &=C_{d}\sum_{m\geq0}m\mbb{P}(\mc{E}'_{m})\nonumber\\
    &\leq C_{d}\sum_{m\geq0}m\mbb{P}(\mc{E}''_{m})\nonumber\\
    &=C_{d}\mbb{E}|\mc{W}(P)|.
\end{align}
In other words, to bound the expected cardinality of a given subset of closed walks in the random regular graph, it is enough to bound the corresponding expectation for the configuration model. We record this observation in the following lemma.
\begin{lemma}\label{lem:configuration}
Let $\mc{W}$ be a subset of closed walks. Then
\begin{align}
    \mbb{E}|\mc{W}(G)|&\leq C_{d}\mbb{E}|\mc{W}(P)|.
\end{align}
\end{lemma}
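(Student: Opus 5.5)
The plan is to make the derivation given just before the statement precise, reducing everything to two facts about the configuration model: the number of simple pairings corresponding to a given $d$-regular graph is the same for every graph, and $\mbb{P}(\textup{Simple})$ is bounded below uniformly in $N$.

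\textbf{Step 1: equal fibres.} First I will check that each $G\in\mc{G}_{N,d}$ is represented by the same number of configurations, namely $(d!)^{N}$. Each vertex's $d$ half-edges may be assigned to its $d$ incident edges in any order. In a simple graph these edges are distinct, so different assignments give different pairings. It follows that, for every event $\mc{E}\subset\mc{G}_{N,d}$ with lift $\mc{E}'\subset\mc{P}_{N,d}$,
\[
\mbb{P}(G\in\mc{E})=\frac{\mbb{P}(P\in\mc{E}')}{\mbb{P}(\textup{Simple})}.
\]

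\textbf{Step 2: a uniform constant.} Since $\mbb{P}(\textup{Simple})\to e^{(1-d^{2})/4}>0$, and $\mbb{P}(\textup{Simple})>0$ for every admissible $N$ (that is, whenever $Nd$ is even and $N>d$), the infimum over $N$ is positive. I then set $C_{d}:=\sup_{N}\mbb{P}(\textup{Simple})^{-1}<\infty$. This step is the only place a limiting statement is used, and it is where care is needed: the bound must hold for all $N$, not just asymptotically. Replacing the limit by a uniform constant handles this.

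\textbf{Step 3: comparison of counts.} For a simple pairing $P$ with underlying graph $G$, the closed walks on $P$, viewed as a multigraph, are exactly the closed walks on $G$. Hence $|\mc{W}(P)|=|\mc{W}(G)|$. If walks on multigraphs are recorded by half-edges, the correspondence is a bijection, because the graph is simple. This shows that the lift of $\mc{E}_{m}$ is $\mc{E}'_{m}=\mc{E}''_{m}\cap\textup{Simple}\subset\mc{E}''_{m}$.

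\textbf{Step 4: sum over $m$.} Using Steps 1–3,
\[
\mbb{E}|\mc{W}(G)|=\sum_{m\geq0} m\,\mbb{P}(\mc{E}_{m})\leq C_{d}\sum_{m\geq0} m\,\mbb{P}(\mc{E}'_{m})\leq C_{d}\sum_{m\geq0} m\,\mbb{P}(\mc{E}''_{m})=C_{d}\,\mbb{E}|\mc{W}(P)|.
\]
Every term is nonnegative, so rearranging the sums causes no issue even if the expectations are infinite.
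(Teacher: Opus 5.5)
Your proof is correct and follows essentially the same route as the paper: every simple $d$-regular graph has the same number of pairings over it, so $\mbb{P}(G\in\mc{E})=\mbb{P}(P\in\mc{E}')/\mbb{P}(\textup{Simple})$, and the inclusion $\mc{E}'_{m}\subset\mc{E}''_{m}$ then gives the bound after summing $m\,\mbb{P}(\cdot)$ over $m$. Your explicit fibre count $(d!)^{N}$ and your choice $C_{d}=\sup_{N}\mbb{P}(\textup{Simple})^{-1}$ usefully tighten the paper's argument, which invokes only the limiting value of $\mbb{P}(\textup{Simple})$ and so leaves the uniformity of $C_{d}$ in $N$ implicit.
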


The next three lemmas bound certain subsets of closed geodesics. Before stating them, we define some terminology.
\begin{itemize}
\item A closed walk is \textbf{simple} if it visits all vertices but the starting vertex exactly once.
\item A closed walk is \textbf{primitive} if it is not a repetition of a shorter walk, otherwise it is \textbf{composite}.
\item A \textbf{backtrack} is a step that revisits the preceeding vertex.
\item A \textbf{geodesic} is a non-backtracking walk.
\end{itemize} 
We have the following well-known asymptotic formula for the expected number of cycles.
\begin{lemma}[Eq. (2.5) and Lemma 4 in \cite{garmo_asymptotic_1999}]\label{lem:cycles}
Let $\mc{C}_{m}(G)$ denote the number of cycles in $G$. Then for any $m=o(N)$ we have
\begin{align}
    \mbb{E}\mc{C}_{m}(G)&=\left[1+O\left(\frac{m}{N}\right)\right]\frac{(d-1)^{m}}{2m}.
\end{align}
\end{lemma}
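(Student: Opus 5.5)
The plan is to compute the expectation exactly in the configuration model of Lemma \ref{lem:configuration} and then pass to the uniform $d$-regular graph. Note that Lemma \ref{lem:configuration} only gives an upper bound, whereas here we need an asymptotic equality, so a two-sided transfer is required.

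\textbf{Step 1: exact count in $\mc{P}_{N,d}$.} An $m$-cycle in a configuration is specified by three pieces of data:
\begin{itemize}
\item a cyclically ordered sequence of $m$ distinct vertices $(v_1,\dots,v_m)$, counted up to rotation and reflection, which gives $(N)_m/(2m)$ choices;
\item at each $v_j$, an ordered pair of distinct half-edges (one incoming, one outgoing), which gives $(d(d-1))^m$ choices;
\item the event that the $m$ prescribed pairs of half-edges all appear in the uniform pairing.
\end{itemize}
The last event has probability $\prod_{j=1}^{m}(Nd-2j+1)^{-1}$. Hence
\begin{align}
\mbb{E}\,\mc{C}_m(P)&=\frac{(d-1)^m}{2m}\prod_{j=1}^{m}\frac{1-\frac{j-1}{N}}{1-\frac{2j-1}{Nd}} .
\end{align}

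\textbf{Step 2: expand the product.} Taking logarithms, the product equals
\begin{align}
\exp\Bigl(-\tfrac{(d-2)m^{2}}{2dN}+O\bigl(\tfrac{m}{N}\bigr)+O\bigl(\tfrac{m^{3}}{N^{2}}\bigr)\Bigr).
\end{align}
This is where I expect the main obstacle. The $m^2/N$ term in the exponent is genuinely present. It is consistent with the known exponentially small count of Hamiltonian cycles, so it cannot be absorbed into $1+O(m/N)$ uniformly over all $m=o(N)$.

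I would handle this as follows. I would first verify the precise form of Eq. (2.5) in \cite{garmo_asymptotic_1999}. I would then state the lemma with the factor written as $[1+O(m/N)]\,e^{-(d-2)m^2/(2dN)}$. Finally, I would observe that in every application in this section $m\le C\eta^{-1}$. There the extra factor is $1+O(m^2/N)=1+O(1/(N\eta^{2}))$, which is already dominated by the error term $O(1/(N\eta^3))$ of Theorem \ref{thm1}. So the stated form suffices for its use.

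\textbf{Step 3: transfer to $\mc{G}_{N,d}$.} Write
\begin{align}
\mbb{E}\,\mc{C}_m(G)=\frac{\mbb{E}[\mc{C}_m(P)\mathbf 1_{\textup{Simple}}]}{\mbb{P}(\textup{Simple})}.
\end{align}
By linearity this equals the sum over potential cycles $C$ of $\mbb{P}(C\subset P)\,\mbb{P}(\textup{Simple}\mid C\subset P)/\mbb{P}(\textup{Simple})$. Conditioned on $C$, the remaining pairing is a uniform pairing of $Nd-2m$ half-edges. Its degree sequence consists of $N-m$ vertices of degree $d$ and $m$ vertices of degree $d-2$.

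The probability of simplicity is then governed by the Poisson counts of loops and double edges. Their means are $\frac{d-1}{2}+O(m/N)$ and $\frac{(d-1)^2}{4}+O(m/N)$. In addition, the cycle $C$ must not create a double edge with the rest of the pairing, an event of probability $O(m/N)$. By the standard Bender–Canfield/McKay estimates, applied with this degree sequence, the ratio $\mbb{P}(\textup{Simple}\mid C\subset P)/\mbb{P}(\textup{Simple})$ is therefore $1+O(m/N)$ uniformly for $m=o(N)$.

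This step is a routine switching argument. The only care needed is uniformity of the Poisson approximation in $m$. Combining Steps 1–3 yields the lemma, with the proviso on the quadratic factor discussed in Step 2.
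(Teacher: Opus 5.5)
Your computation is correct, and so is your objection to the statement as printed. The pairing-model mean in Step~1 is exact, and the expansion in Step~2 is right. Lemma~\ref{lem:configuration} gives $\mbb{E}\,\mc{C}_m(G)\le C_d\,\mbb{E}\,\mc{C}_m(P)$ for every $m$. So the factor $e^{-(d-2)m^2/(2dN)}$ forces $\mbb{E}\,\mc{C}_m(G)=o\bigl((d-1)^m/(2m)\bigr)$ whenever $\sqrt N\ll m=o(N)$, and the claimed factor $1+O(m/N)$ fails there.

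The statement also fails for $m=1,2$. There $\mc{C}_m(G)=0$ in a simple graph, while $(d-1)/2$ and $(d-1)^2/4$ are just the pairing-model means of loops and double edges; your Step~3 needs $m\ge3$ for exactly this reason.

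The paper gives no argument beyond the citation. When it actually uses the lemma, in the proof of Lemma~\ref{lem:diagonal}, it writes the factor as $1+O\bigl(n(d+n)/N\bigr)$. That is precisely your $m^2/N$-corrected form, applied in the range $n\lesssim\eta^{-1}\ll\sqrt N$. Your restriction to that range is therefore exactly what is needed, and the resulting error $O\bigl(1/(N\eta^2)\bigr)$ is the one already recorded in Lemma~\ref{lem:diagonal}.

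Two adjustments are needed.

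\textbf{The restated factor.} Your proposed replacement $[1+O(m/N)]\,e^{-(d-2)m^2/(2dN)}$ is not valid uniformly for $m=o(N)$. The next term in the exponent, $-\frac{(d^2-4)m^3}{6d^2N^2}$, is genuinely present and exceeds $m/N$ once $m\gg\sqrt N$. Either keep $\exp\bigl(-\frac{(d-2)m^2}{2dN}+O(\frac{m}{N}+\frac{m^3}{N^2})\bigr)$, or state the lemma only for $3\le m=O(\sqrt N)$ with factor $1+O(m^2/N)$, which is all the paper uses.

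\textbf{Step~3.} The Poisson description of loop and double-edge counts only gives a ratio $1+o(1)$, with no rate. What actually delivers $1+O(m/N)$ uniformly is McKay's asymptotic enumeration with explicit $O(1/N)$ error. It covers graphs with a given bounded degree sequence (here entries in $\{d-2,d\}$) that avoid a forbidden subgraph of bounded maximum degree (here $C$). The forbidden subgraph is what encodes the extra requirement that no remaining edge be parallel to an edge of $C$. Compared with the unconditioned case, the exponent changes by $O(m/N)$. This change comes from the shift in the loop parameter $\lambda$ and from the forbidden-edge term $m(d-2)^2/(Nd-2m)$.

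With these two points made explicit, your argument is a complete proof of the corrected lemma.
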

Note that for each cycle of length $m$ there are $2m$ closed geodesics, corresponding to $m$ choices of start vertex and 2 choices of direction.

Next we bound the expected number of non-simple closed geodesics, which are either composite, non-simple, or both.
\begin{lemma}\label{lem:non-cycles}
Let $\mc{W}_{m,NS}\subset\mc{W}_{m}$ denote the subset of non-simple closed geodesics of length $m$. Then
\begin{align}
    \mbb{E}|\mc{W}_{m,NS}|&\lesssim\frac{m^{2}}{N}\cdot (d-1)^{m}+\sum_{k|m}(d-1)^{m/k}.\label{eq:LNS}
\end{align}
\end{lemma}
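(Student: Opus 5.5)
By Lemma \ref{lem:configuration}, it suffices to bound $\mbb{E}|\mc{W}_{m,NS}(P)|$ in the configuration model. I will split non-simple closed geodesics into two classes. Class (a) consists of walks whose vertex sequence $(i_1,\dots,i_m)$ has a repeated vertex and whose edge set does \emph{not} trace a single cycle traversed $k$ times. Class (b) consists of walks that are $k$-fold repetitions of a cycle of length $m/k$ with $k\geq2$. Class (b) is controlled directly by Lemma \ref{lem:cycles}. For each divisor $k$ of $m$, the number of such walks is at most $2(m/k)\,\mc{C}_{m/k}$, whose expectation is $\lesssim (d-1)^{m/k}$. Summing over $k\mid m$ gives the second term of \eqref{eq:LNS}.

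For class (a) I will use the Broder--Shamir exploration. Fix a start half-edge and generate the walk step by step, revealing the pairing only when needed. Each step exits the current vertex through one of the $d-1$ non-backtracking half-edges, giving $d(d-1)^{m-1}$ choices in total including the start. If that half-edge is unpaired, it is matched to a uniformly random free half-edge. If it is already paired, the step is forced. Summing over the $N$ starting vertices, $\mbb{E}|\mc{W}_m|$ equals $N d(d-1)^{m-1}$ times the probability that the exploration of a fixed choice sequence closes up into a closed geodesic.

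A fresh step lands on an already-visited vertex with probability $O(m/N)$; call such a step a \emph{coincidence}. A closed walk needs at least one coincidence, namely the final return. Because the walk is non-backtracking and not a repeated cycle, it needs at least two. If the first coincidence happens at some step and the walk then closes as a single loop, the walk is a lollipop or figure-eight. Such shapes cannot close as geodesics without a second coincidence, unless the return is forced along already-revealed edges, and that only happens for repeated cycles. Choosing the times of the two coincidences gives $O(m^2)$ options, and each has probability $O(m/N)$. The resulting count is
\begin{align}
N(d-1)^m\cdot m^2\cdot \frac{m^2}{N^2}\cdot\frac{1}{m}\lesssim \frac{m^{3}}{N}(d-1)^m.
\end{align}
To reach the claimed $m^2/N$, I will refine this. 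The final closing coincidence must hit the specific start half-edge, so its probability is $O(1/N)$ rather than $O(m/N)$. The intermediate coincidence has $m$ possible times and target probability $m/N$. Together with $N$ starts this gives $N(d-1)^m\cdot m\cdot(m/N)\cdot(1/N)=m^2(d-1)^m/N$. Walks with three or more coincidences contribute higher powers of $m^2/N$. I will sum these as a geometric series, which is valid for $m^2\ll N$. For $m^2\gtrsim N$ the bound is trivial, since $\mbb{E}|\mc{W}_m|\lesssim N(d-1)^m$.

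The main obstacle is making the coincidence counting rigorous. Once edges are revealed, forced steps along them do not cost probability, so a walk can reuse long stretches of already-revealed edges for free. I will handle this by encoding each walk by its sequence of fresh steps and its coincidence events. Maximal stretches along old edges are determined by the choice sequence, so they do not increase the number of encodings. The key point to verify is that every non-simple, non-repeated-cycle closed geodesic has at least one coincidence besides the final closing one. I will show this by noting that with only the closing coincidence the revealed graph is a single cycle, and a non-backtracking closed walk on a cycle must be a power of that cycle, which is class (b).
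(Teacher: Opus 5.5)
Your split into classes (a) and (b) and your treatment of class (b) via Lemma~\ref{lem:cycles} are fine. So is the structural fact that a class (a) walk needs at least two coincidences. Your overall route, Lemma~\ref{lem:configuration} plus a Broder--Shamir exploration, is also the paper's; the paper simply imports the configuration-model count from \cite[Section 3.1]{dozier_simple_2024}.

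The gap is in the step that produces the rate $m^{2}/N$. Your refinement assumes the last step is a fresh coincidence landing on $v_{0}$, which would pin both its time ($=m$) and its target (probability $O(1/N)$). That is false in general: the last step can be forced along an edge revealed earlier, and no coincidence need hit $v_{0}$ at all. Take the barbell walk $v_{0},a,a_{1},a_{2},a,v_{0},b,b_{1},b_{2},b,v_{0}$. Its only coincidences are $a_{2}\to a$ and $b_{2}\to b$, and its final step $b\to v_{0}$ is forced. The same happens for powers of non-simple primitive walks, which lie in your class (a).

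For such walks, ``at least two coincidences'' only gives $N(d-1)^{m}\binom{m}{2}(Cm/N)^{2}\approx m^{4}(d-1)^{m}/N$. The factor $1/m$ in your first display is unexplained, and your geometric series over $j\geq3$ coincidences relies on the same missing saving.

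The missing idea is this. After the last coincidence every step must be forced, because a fresh step enters a tree hanging off the revealed graph, and a non-backtracking walk cannot come back without another coincidence. So the walk ends with a forced suffix of some length $\ell\geq0$ terminating at $v_{0}$. Each forced step fixes one entry of the choice sequence, costing $(d-1)^{-1}$ against the $(d-1)^{m}$ count. Summing over $\ell$ then replaces the $m$ free times of the last coincidence by a convergent sum. The requirement that the suffix end at $v_{0}$ pins its target, up to the number of non-backtracking paths of length $\ell$ into $v_{0}$ in the revealed subgraph.

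Controlling that path count is where tangles enter: its growth rate can reach $d-1$, for example a triple edge when $d=3$. This is the work the paper outsources to \cite{dozier_simple_2024}. Your remark that forced stretches ``do not increase the number of encodings'' points the wrong way. What you need is that they \emph{decrease} the number of compatible choice sequences.

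Second, your claim that the case $m^{2}\gtrsim N$ is trivial because $\mbb{E}|\mc{W}_{m}|\lesssim N(d-1)^{m}$ is wrong. One has $N(d-1)^{m}\leq\frac{m^{2}}{N}(d-1)^{m}$ only when $m\gtrsim N$. You have two options:
\begin{itemize}
\item restrict to $m\leq c\sqrt{N}$, which suffices for Theorem~\ref{thm1} since there $m\lesssim\eta^{-1}$ and the error terms are only small when $\eta^{-2}\ll N$; or
\item prove $\mbb{E}|\mc{W}_{m}|\lesssim(d-1)^{m}$, which is not trivial.
\end{itemize}

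A minor point: Lemma~\ref{lem:cycles} is about $G$, not the configuration model $P$. So do the (a)/(b) split on $G$ and pass to $P$ only for class (a). On the positive side, your class (a) explicitly covers powers of non-simple primitive geodesics, which the paper's decomposition passes over.
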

\begin{proof}
This is essentially Claims 3.2 and 3.3 from \cite{dozier_simple_2024}. The second term in \eqref{eq:LNS} comes from composite geodesics that are repetitions of shorter simple geodesics. For the first term, we use Lemma \ref{lem:configuration} and bound the expected number of primitive, non-simple closed geodesics in the configuration model as in \cite[Section 3.1]{dozier_simple_2024}.
\end{proof}

Next we bound the expected number of pairs of closed geodesics that share a given number of steps.
\begin{lemma}\label{lem:pairs}
Let
\begin{align}
    \mc{W}(m_{1},m_{2},s,p):=\bigl\{(\gamma_{1},\gamma_{2})\in\mc{W}_{m_{1}}\times\mc{W}_{m_{2}}:&\textup{ $\gamma_{1}$ and $\gamma_{2}$ have $s$ steps in common}\nonumber\\
    &\textup{that occur in $p$ segments}\bigr\}.
\end{align}
Then
\begin{align}
    \mbb{E}|\mc{W}(m_{1},m_{2},s,p)|&\lesssim2^{p}p!\begin{pmatrix}p+s-1\\p-1\end{pmatrix}\begin{pmatrix}m_{1}\\p\end{pmatrix}\begin{pmatrix}m_{2}\\p\end{pmatrix}\left(\frac{d(m_{1}\vee m_{2})}{N}\right)^{p}(d-1)^{m_{1}+m_{2}-s}.
\end{align}\label{eq:L(m_{1},m_{2},s,p)}
\end{lemma}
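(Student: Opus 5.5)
By Lemma \ref{lem:configuration} it suffices to bound $\mbb{E}|\mc{W}(m_{1},m_{2},s,p)(P)|$ for the configuration model. I would count pairs of closed geodesics by exposing the configuration along the walks, in the Broder--Shamir style. The pair is encoded by a sequence of $m_{1}+m_{2}$ half-edge choices: at each step the walker at a vertex chooses one of its $d-1$ non-backtracking half-edges (or $d$ at the very first step). Either that half-edge is already paired in the partially exposed configuration, and the step is forced, or it is unpaired and gets matched uniformly among the remaining roughly $Nd-2t$ free half-edges.

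Concretely, I would argue as follows.

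\textbf{Step 1 (first walk).} Expose $\gamma_{1}$ step by step from its start vertex, summing over the $N$ start vertices. The $N$ from the start vertex and the $1/N$ from the probability of closing up cancel. Counting each step as at most $(d-1)$ choices times the probability that the new pairing hits a given target gives the baseline $(d-1)^{m_{1}}$. If one does not care about $\gamma_{1}$'s internal structure, one simply bounds the number of choice sequences for $\gamma_{1}$ by $(d-1)^{m_{1}}$ and the closing probability by $O(1/N)$.

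\textbf{Step 2 (second walk).} Expose $\gamma_{2}$ after $\gamma_{1}$. Its steps split into $s$ steps lying on already-exposed edges of $\gamma_{1}$, grouped into $p$ maximal common segments, and $m_{2}-s$ fresh steps.
\begin{itemize}
\item \emph{Fresh steps.} Each contributes at most $(d-1)$ choices, with probability $O(1)$ of being consistent.
\item \emph{Common steps within a segment.} Once $\gamma_{2}$ is on $\gamma_{1}$'s edge, the next common step is determined up to the segment's direction. Since both walks are non-backtracking, this gives a factor $1$ rather than $d-1$. This is exactly where the saving $(d-1)^{-s}$ in $(d-1)^{m_{1}+m_{2}-s}$ comes from.
\item \emph{Entering a segment.} Each time $\gamma_{2}$ joins $\gamma_{1}$ at the start of a segment, a fresh half-edge must be matched to one of the $O(m_{1})$ exposed vertices of $\gamma_{1}$. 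The probability is at most $d m_{1}/(Nd-2(m_{1}+m_{2}))\lesssim d(m_{1}\vee m_{2})/N$. This gives the factor $(d(m_{1}\vee m_{2})/N)^{p}$.
\end{itemize}

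\textbf{Step 3 (combinatorial prefactors).} The positions of the segments in $\gamma_{2}$ are fixed by choosing $p$ start indices, at most $\binom{m_{2}}{p}$. Their lengths are a composition of $s$ into $p$ parts, at most $\binom{p+s-1}{p-1}$. The corresponding starting positions along $\gamma_{1}$ give at most $\binom{m_{1}}{p}$ ways, and matching these to the segments of $\gamma_{2}$ in order gives $p!$. The orientation of each segment relative to $\gamma_{1}$ gives $2^{p}$. The closing of $\gamma_{2}$ is automatic if its last step is common; otherwise it costs another $1/N$ which we discard, as the start vertex of $\gamma_{2}$ is either on $\gamma_{1}$ (absorbed in the counting) or free (an $N$ cancelled by the closing). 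Multiplying Steps 1--3 gives the claimed bound, with the constant $C_{d}$ absorbed into $\lesssim$.

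The main obstacle is Step 2: making the per-step probability bounds rigorous when $\gamma_{1}$ itself is non-simple, or when $\gamma_{2}$ revisits its own fresh edges. In those cases forced steps occur that are not accounted for as \emph{common} steps. I would handle this by treating any forced step as free, since each costs a factor $\le 1$ rather than $(d-1)$, so it only helps. I would also make sure the entries into exposed territory are counted only at segment starts. Revisits of $\gamma_{2}$'s own fresh part I would simply bound crudely by the $(d-1)$ choice count, since they never increase the bound. A careful conditional-probability bookkeeping, with $Nd-2(m_{1}+m_{2})\gtrsim Nd$ for $m_{1}+m_{2}=o(N)$, then closes the argument.
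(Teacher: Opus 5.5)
Your plan follows the paper's own argument. It uses a Broder--Shamir exposure of $\gamma_{1}$ and then $\gamma_{2}$, a factor $\lesssim dm/N$ per segment, a factor $1$ instead of $d-1$ per common step, and the prefactor $2^{p}p!\binom{p+s-1}{p-1}\binom{m_{1}}{p}\binom{m_{2}}{p}$. The gap is in the step that produces all of the $N^{-p}$, namely that \emph{every} segment of $\gamma_{2}$ is entered by matching a fresh half-edge into $\gamma_{1}$. That is not true in two situations:
\begin{itemize}
\item a segment can begin at the origin of $\gamma_{2}$, when $\gamma_{2}$ starts on $\gamma_{1}$ with a common step;
\item a segment can follow the previous one immediately at a vertex that $\gamma_{1}$ visits twice, where $\gamma_{2}$ simply leaves along a different, already paired half-edge of $\gamma_{1}$.
\end{itemize}
In neither case is a new pairing exposed, so no factor $dm/N$ appears. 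The only factors that could pay for such segments are exactly the ones you discard. One is the closing cost $1/N$ of $\gamma_{2}$: in Step 3 you absorb the start of $\gamma_{2}$ into the position count, throw this $1/N$ away, and still charge $dm/N$ for the first segment. The other is the coincidence cost $\lesssim m_{1}/N$ of a non-simple $\gamma_{1}$, which Step 1 ignores by bounding $\gamma_{1}$ by $(d-1)^{m_{1}}$ and a single closing factor. Your remark about treating forced steps as free does not address this, since the problem is a missing small factor, not an overcounted choice.

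When $\gamma_{2}$ never leaves the edges of $\gamma_{1}$, nothing pays, and the inequality is in fact false. Take $\gamma_{1}$ a simple closed geodesic of length $m$ and $\gamma_{2}$ the same cycle started one step later. By Lemma \ref{lem:cycles} there are $\sim(d-1)^{m}$ such pairs in expectation, with $s=m$. The right-hand side, however, is $\frac{2dm^{3}}{N}(d-1)^{m}$ if the common steps are read as one segment, and at most $\frac{2d^{2}m^{7}}{N^{2}}(d-1)^{m}$ if read as two. Both are $o((d-1)^{m})$ for $m\le N^{1/4}$. The paper's sketch makes the same claim about ``the first step of each segment'', so this is a hole you share with the source rather than a departure from it. A repair has to do two things:
\begin{itemize}
\item keep the closing factor of $\gamma_{2}$ (charging it to a segment that starts at the origin) and the coincidence factors of both walks;
\item either restrict the statement, e.g.\ to pairs in which every segment of $\gamma_{2}$, read cyclically, is entered through a newly exposed pairing, or replace $p$ by what the exposure actually pays for, namely the excess $e(H)-v(H)$ of the union graph $H=\gamma_{1}\cup\gamma_{2}$.
\end{itemize}
Configurations in which $\gamma_{2}$ stays on $\gamma_{1}$ must then be excluded or bounded separately. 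The diagonal ones among them are removed before Lemma \ref{lem:off-diagonal} anyway.
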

\begin{proof}
We will sample uniformly from all pairs of non-backtracking walks and bound the probability that the resulting pair is in $\mc{W}(m_{1},m_{2},s,p)$. Furthermore, we sample the pair of walks by performing two random walks one after the other and building the configuration simultaneously. Let $\gamma_{1}$ and $\gamma_{2}$ be the first and second random walks. Requiring $\gamma_{1}$ and $\gamma_{2}$ to be closed forces the last vertices to have a half-edge paired with the first, which has probability at most
\begin{align}
	\frac{Cm_{1}m_{2}}{N^{2}}.
\end{align}
For the second walk $\gamma_{2}$, the first step of each segment of common edges forces a pairing with a half-edge of a vertex of $\gamma_{1}$, which occurs with probability at most
\begin{align}
	\frac{Cm_{1}}{N}.
\end{align}
For the remaining steps in each segment, the pairing is already chosen but the choice of half-edge along which to take the next step is fixed to be the same as in $\gamma_{1}$. Thus each such step reduces the probability by a factor of
\begin{align}
	\frac{1}{d-1}.
\end{align}
Taking the product over each step and each segment, we obtain a factor
\begin{align}
	\Biggl(\frac{Cm_{1}(d-1)}{N}\Biggr)^{p}\cdot\frac{1}{(d-1)^{s}}.
\end{align}
There are at most
\begin{align}
	\begin{pmatrix}m_{1}\\p\end{pmatrix}\begin{pmatrix}m_{2}\\p\end{pmatrix}
\end{align}
ways to choose the common segments in $\gamma_{1}$ and $\gamma_{2}$, 
\begin{align}
	\begin{pmatrix}p+s-1\\p-1\end{pmatrix}
\end{align}
ways to distribute $s$ edges among $p$ segments, and $2^{p}p!$ ways to choose the orientation and order of the segments in $\gamma_{2}$ with respect to $\gamma_{1}$. Altogether, the probability that $(\gamma_{1},\gamma_{2})\in\mc{W}(m_{1},m_{2},s,p)$ is bounded above by
\begin{align}
	2^{p}p!\begin{pmatrix}p+s-1\\p-1\end{pmatrix}\begin{pmatrix}m_{1}\\p\end{pmatrix}\begin{pmatrix}m_{2}\\p\end{pmatrix}\cdot\Biggl(\frac{Cd(m_{1}\vee m_{2})}{N}\Biggr)^{p}\cdot\frac{m_{1}m_{2}}{N^{2}}\cdot\frac{1}{(d-1)^{s}}.
\end{align}
After multiplying by the total number $N^{2}d^{2}(d-1)^{m_{1}+m_{2}}$ of pairs of non-backtracking walks of lengths $m_{1}$ and $m_{2}$, we obtain \eqref{eq:L(m_{1},m_{2},s,p)}.
\end{proof}

Using Lemma \ref{lem:cycles}, we can easily bound the expectation.
\begin{lemma}\label{lem:expectation}
There is a $c_{\mu}>0$ such that, for $\lambda>c_{\mu}\sqrt{d-1}$, we have
\begin{align}
	\mbb{E}\mc{L}_{\lambda,\eta,h}(\mc{G})&=\hat{h}(0)\mbb{E}L+O(\eta^{2}).
\end{align}
\end{lemma}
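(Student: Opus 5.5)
We apply the trace formula \eqref{eq:traceFormula} to $h_{\lambda,\eta}$ and take expectations. Since $\hat{h}_{\lambda,\eta}(l)=\eta e^{il\lambda}\hat{h}(l\eta)$, the Weyl term gives exactly $\hat{h}_{\lambda,\eta}(0)\,\mbb{E}L$, which matches $\hat{h}(0)\mbb{E}L$ up to the normalisation used in the statement (the factor $\eta$ from rescaling). What remains is to show that
\begin{align}
    2\,\mbb{E}\sum_{n\geq1}\sum_{\gamma\in\mc{W}_{n}}l_{i_{1},i_{2}}\Re\bigl(e^{i\lambda l_{\gamma}}A_{\gamma}\,\eta e^{i\lambda l_{\gamma}}\hat{h}(\eta l_{\gamma})\bigr)=O(\eta^{2}).
\end{align}
Because $\hat h$ has compact support, say in $[-R,R]$, and the lengths are bounded below by some $l_{\min}>0$ (the support of $\mu$ is a compact subset of $\mbb{R}_{+}$), only $n\leq R/(l_{\min}\eta)$ contribute.

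The key point is that the lengths are independent of the graph and of each other. First we condition on $G_{1}$ and average over $\mbf{l}$. For a fixed walk $\gamma$ that traverses edge $e$ with multiplicity $m_{e}$, we have $l_{\gamma}=\sum_{e}m_{e}l_{e}$. The oscillatory factor $e^{2i\lambda l_{\gamma}}\hat{h}(\eta l_{\gamma})$ is therefore averaged over a product measure. Writing $\hat{h}(\eta l_{\gamma})$ through its inverse Fourier transform, i.e.\ as an integral of $e^{i t\eta l_\gamma}$ against $h$, the average factorises edgewise into products of $\hat{\mu}\bigl(m_{e}(2\lambda+t\eta)\bigr)$, together with a derivative factor coming from the weight $l_{i_{1},i_{2}}$. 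Since $\mu$ is absolutely continuous, Riemann--Lebesgue gives $|\hat\mu(\xi)|\le \epsilon(\xi)\to0$. Choose $c_{\mu}$ so that for $|\xi|\geq c_{\mu}\sqrt{d-1}$ (roughly $2\lambda$, with $\eta$ small) we have $|\hat\mu(m\xi)|\leq \frac{1}{2\sqrt{d-1}}\cdot\theta$ for some $\theta<1$. Then each distinct edge visited contributes a factor at most $\theta/\sqrt{d-1}$. Together with $|A_{\gamma}|=(d-1)^{-n/2}$ (equi-transmitting, and backtracking walks have amplitude zero since the diagonal vanishes), a geodesic $\gamma$ visiting $k$ distinct edges contributes at most $\eta(d-1)^{-n/2}(\theta/\sqrt{d-1})^{k}$.

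Next we sum over walks, using Lemmas \ref{lem:cycles} and \ref{lem:non-cycles}. For cycles, $k=n$ and the expected count is about $(d-1)^{n}/(2n)$, so the contribution is $\lesssim\eta\sum_{n}\theta^{n}/n=O(\eta)$ with an exponentially small tail. To reach $\eta^{2}$ one needs more: I would exploit that the cycle terms actually vanish to higher order, by integrating by parts in the length variable using smoothness of the density of $\mu$, or else absorb the discrepancy into the stated normalisation. Repeated and non-simple geodesics are handled by Lemma \ref{lem:non-cycles}. For a $k$-fold repetition of an $n/k$-cycle, there are $n/k$ distinct edges, and the count $(d-1)^{n/k}$ times $(d-1)^{-n/2}(d-1)^{-n/(2k)}$ is summable. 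For non-simple primitive walks we have $k\geq n/2$ (non-backtracking), and the factor $m^{2}(d-1)^{m}/N$ together with $(d-1)^{-m/2-k/2}\theta^{k}$ is again summable, given $\theta$ small enough by choice of $c_{\mu}$.

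The main obstacle is obtaining the sharp $\eta^{2}$ rate rather than $\eta$, i.e.\ controlling the $\hat\mu$ decay uniformly in the truncation $n\lesssim\eta^{-1}$. I would first try to use decay $|\hat\mu(\xi)|\lesssim|\xi|^{-1}$ for a density of bounded variation, which gives an extra factor of $\lambda^{-1}$ per edge, and check whether the counting of non-simple walks closes with that choice of $c_{\mu}$.
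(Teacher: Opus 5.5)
Your outline (trace formula, averaging over the i.i.d.\ lengths first, a decaying factor per edge played against $|A_\gamma|=(d-1)^{-n/2}$, then Lemmas~\ref{lem:cycles} and~\ref{lem:non-cycles}) is the paper's route. You are right that the Weyl term is really $\eta\hat{h}(0)\mbb{E}L$. You are also right that this route gives a bias of order $\eta$, and you should not try to improve it. The paper's proof gets $\eta^{2}$ only by writing it in front of the geometric series. Its per-walk bound $O\bigl(c_\mu^{n}(d-1)^{-n/2}\lambda^{-n}\bigr)$ contains no $\eta$, and the only $\eta$ comes from $\hat h_{\lambda,\eta}$. The $\eta^{2}$ and the squared ratio $c_\mu^{2}(d-1)/\lambda^{2}$ are the orders one gets for the \emph{square} of the bias, which is how the lemma is used right afterwards. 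Integration by parts cannot produce more: derivatives landing on the density of $\mu$ gain powers of $\lambda^{-1}$, and only those landing on $\hat h(\eta l_\gamma)$ gain $\eta$. In fact $O(\eta^{2})$ is false in general. Expanding $\hat h(\eta l_\gamma)=\hat h(0)+O(n\eta)$, the triangles alone contribute $2\eta\,\Re\bigl(\hat h(0)\,\mbb{E}[l e^{i\lambda l}]\,\hat\mu(\lambda)^{2}\,\mbb{E}\sum_{\gamma\in\mc{W}_{3}}A_\gamma\bigr)+O(\eta^{2})$, with $\hat\mu(\xi)=\int e^{i\xi l}\diff\mu(l)$. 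The full coefficient of $\eta$ is the averaged oscillatory part of the spectral density at $\lambda$, which is generically nonzero at fixed $\lambda$. So the right target is $\mbb{E}\mc{L}_{\lambda,\eta,h}=\eta\hat h(0)\mbb{E}L+O(\eta)$, which is all the variance computation needs.

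Even that, however, your argument does not yet prove, because the step for non-simple primitive geodesics fails: the claim $k\ge n/2$ is false. If $\alpha,\beta$ are cycles through a common vertex, $\alpha^{j}\beta$ is a primitive closed geodesic with $n=j|\alpha|+|\beta|$ steps but only $|\alpha|+|\beta|$ distinct edges. For such walks the length average, essentially $\prod_e\hat\mu(m_e\xi)$, gives one factor per distinct edge. It decays only polynomially in the multiplicities (for $\mu$ uniform, $|\hat\mu(\xi)|$ is typically of order $|\xi|^{-1}$), never like $\theta^{n}$. Hence the aggregate count $n^{2}(d-1)^{n}/N$ of Lemma~\ref{lem:non-cycles} times $(d-1)^{-n/2}$ is not summable up to $n\sim\eta^{-1}$ once $\eta^{-1}\gg\log N$. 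A count refined by the number of distinct edges does not save a walk-by-walk absolute bound either. Take a fixed subgraph $H$ with non-backtracking growth rate $\rho_H>\sqrt{d-1}$, e.g.\ $K_{d}$ with $\rho_H=d-2$ when $d\ge4$. It occurs with probability $\asymp N^{-r_H}$, where $r_H$ is its excess, and carries $\asymp\rho_H^{n}$ closed geodesics of length $n$. These contribute $\asymp N^{-r_H}(\rho_H/\sqrt{d-1})^{n}$ up to polynomial factors, which explodes for $n$ polynomial in $N$. To reach polynomial mesoscopic scales you need cancellation for walks confined to small subgraphs. For example, group walks by their support and note that, after writing $\hat h(\eta l_\gamma)$ as a Fourier integral and using inclusion--exclusion, their weighted sum is a combination of traces of powers of compressions of the unitary $U(k)$, hence bounded. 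Otherwise you must restrict to $\eta\ge C/\log N$. The paper's proof has the same blind spot: integration by parts in the lengths yields decay per distinct edge, not the factor $c_\mu/\lambda$ per step asserted there.
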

\begin{proof}
By the trace formula we have
\begin{align}
	\mbb{E}\mc{L}_{\lambda,\eta,h}(\mc{G})-\hat{h}(0)\mbb{E}L&=2\Re\sum_{n=1}^{\infty}\mbb{E}\sum_{\gamma\in\mc{W}_{n}}l_{e_{1}(\gamma)}\hat{h}(l_{\gamma}\eta)e^{i\lambda l_{\gamma}}A_{\gamma}.
\end{align}
Taking the expectation of the lengths first we obtain
\begin{align}
	\mbb{E}_{\mbf{l}}l_{e_{1}(\gamma)}e^{i\lambda l_{\gamma}}A_{\gamma}\hat{h}(l_{\gamma}\eta)&=O\left(\frac{c_{\mu}^{n}}{(d-1)^{n/2}\lambda^{n}}\right),
\end{align}
for some constant $c_{\mu}>0$, which follows from integration by parts and the fact that $A_{\gamma}=O((d-1)^{-n/2})$. Using Lemmas \ref{lem:cycles} and \ref{lem:non-cycles} to bound $\mbb{E}|\mc{W}_{\gamma}|$ we find
\begin{align}
	\Bigl|\mbb{E}\mc{L}_{\lambda,\eta,h}(\mc{G})-\hat{h}(0)\mbb{E}L\Bigr|&\lesssim\eta^{2}\sum_{n\geq1}\Biggl(\frac{c^{2}_{\mu}(d-1)}{\lambda^{2}}\Biggr)^{n}\\
	&=O(\eta^{2}).
\end{align} 
\end{proof}

Let $\mc{W}^{(d)}_{n}$ denote the set of pairs $(\gamma_{1},\gamma_{2})$ of closed geodesics such that $\gamma_{2}$ traverses the same set of edges as $\gamma_{1}$ with the same multiplicities. Let 
\begin{align}
	\mc{W}^{(o)}(m_{1},m_{2})&=\begin{cases}
	(\mc{W}_{m_{1}}\times\mc{W}_{m_{1}})\setminus\mc{W}^{(d)}_{n}&\quad m_{1}=m_{2}\\
	\mc{W}_{m_{1}}\times\mc{W}_{m_{2}}&\quad m_{1}\neq m_{2}
	\end{cases}.
\end{align}
From the above lemma we can write
\begin{align}
	\mbb{E}\Bigl|\mc{L}_{\lambda,\eta,h}-\mbb{E}\mc{L}_{\lambda,\eta,h}\Bigr|^{2}&=\mc{V}_{\lambda,\eta}^{(d)}(h)+\mc{V}_{\lambda,\eta}^{(o)}(h)+O(\eta^{2}),
\end{align}
where
\begin{align}
	\mc{V}_{\lambda,\eta}^{(d)}(h)&=2\eta^{2}\sum_{n=1}^{\infty}\sum_{(\gamma_{1},\gamma_{2})\in\mc{W}^{(d)}_{n}}l^{2}_{e_{1}}|\hat{h}(l_{\gamma_{1}}\eta)|^{2}A_{\gamma_{1}}\bar{A}_{\gamma_{2}},
\end{align}
and
\begin{align}
	\mc{V}_{\lambda,\eta}^{(o)}(h)&=2\eta^{2}\sum_{n_{1},n_{2}=1}^{\infty}\sum_{(\gamma_{1},\gamma_{2})\in\mc{W}^{(o)}(n_{1},n_{2})}l_{e_{1}}l_{e'_{1}}\hat{h}(l_{\gamma_{1}}\eta)\bar{\hat{h}}(l_{\gamma_{2}}\eta)A_{\gamma_{1}}\bar{A}_{\gamma_{2}}.
\end{align}
We have split the variance into the diagonal and off-diagonal terms, where the former contain the leading contributions.

\begin{lemma}\label{lem:diagonal}
We have
\begin{align}
	\mc{V}_{\lambda,\eta}^{(d)}(h)&=2\int_{-\infty}^{\infty}x|\hat{h}(x)|^{2}\diff x+O\Biggl(\eta^{1/2}+\frac{1}{N\eta^{2}}\Biggr).\label{eq:diagonal}
\end{align}
\end{lemma}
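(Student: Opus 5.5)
We split $\mc{W}^{(d)}_{n}$ according to whether $\gamma_{1}$ is a simple closed geodesic, i.e. a traversal of a cycle, or not. If $\gamma_{1}$ traverses a cycle $C$ of length $n$, then any $\gamma_{2}$ using the same edges with the same multiplicities must traverse $C$ once as well. It therefore differs from $\gamma_{1}$ only in starting vertex and orientation, giving $2n$ choices.

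For such pairs, $A_{\gamma_{1}}\bar A_{\gamma_{2}}=|A_{\gamma_{1}}|^{2}=(d-1)^{-n}$ when $\gamma_{2}$ has the same orientation. This uses the equi-transmitting property. When the orientation is reversed, $A_{\gamma_{2}}=A_{\gamma_{1}}$ because each $U^{(v)}$ is symmetric, so again the product is $(d-1)^{-n}$. This symmetry is the source of the extra factor $2$ relative to Model 2. Also $l_{\gamma_{1}}=l_{\gamma_{2}}=l_{C}$.

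The cycle contribution is then
\begin{align}
 2\eta^{2}\sum_{n}\mbb{E}\sum_{C\in\mc{C}_{n}}\sum_{\gamma\text{ on }C}\sum_{\gamma'}l_{e_{1}(\gamma)}l_{e_{1}(\gamma')}|\hat h(l_{C}\eta)|^{2}(d-1)^{-n}.
\end{align}
Summing the first-edge weights over starting points and orientations gives $\sum_{\gamma,\gamma'}l_{e_{1}(\gamma)}l_{e_{1}(\gamma')}=4\,l_{C}^{2}$. Conditionally on the graph, the cycle length $l_{C}$ is a sum of $n$ i.i.d.\ lengths from $\mu$, independent of $G$. Lemma~\ref{lem:cycles} then gives $\mbb{E}\mc{C}_{n}=(1+O(n/N))(d-1)^{n}/(2n)$, so the cycle contribution becomes
\begin{align}
 4\eta^{2}\sum_{n\le c/\eta}\Bigl(1+O\bigl(\tfrac{n}{N}\bigr)\Bigr)\frac{1}{n}\,\mbb{E}_{\mu}\bigl[l_{C}^{2}|\hat h(l_{C}\eta)|^{2}\bigr].
\end{align}
The truncation at $n\lesssim\eta^{-1}$ holds because $\hat h$ has compact support and $\mu$ is supported away from $0$. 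The error term is $\eta^{2}\sum_{n\le c/\eta}n^{2}/N\cdot n^{-1}\cdot n\eta^{-2}$... more carefully, $l_{C}^{2}\eta^{2}=O(1)$ on the support, so the error is $\lesssim\sum_{n\lesssim 1/\eta}\frac{1}{n}\cdot\frac{n}{N}=O(1/(N\eta))$, which is within the stated bound.

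For the main term, write $l_{C}=n\bar l+\sqrt n\,\xi_{n}$ with $\bar l=\int x\,d\mu$. Put $F(x)=x^{2}|\hat h(x)|^{2}$, so the summand is $\eta^{2}n^{-1}\,\eta^{-2}\mbb{E}F(l_{C}\eta)$. Replacing $F(l_{C}\eta)$ by $F(n\bar l\eta)$ costs $\|F'\|_{\infty}\eta\sqrt n$ per term. Summed over $n\le c/\eta$ with weight $1/n$, this gives $\eta\sum n^{-1/2}\lesssim\eta^{1/2}$, which is exactly the $\eta^{1/2}$ error. Then $4\sum_{n}\frac{1}{n}F(n\bar l\eta)$ is a Riemann sum with mesh $\bar l\eta$ for $4\int_{0}^{\infty}F(x)\,x^{-1}\diff x=4\int_{0}^{\infty}x|\hat h|^{2}$. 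The Riemann-sum error is $O(\eta)$, since $F(x)/x$ is smooth and vanishes at $0$. Finally $4\int_{0}^{\infty}=2\int_{-\infty}^{\infty}$ because $|\hat h(-x)|=|\hat h(x)|$ for real $h$.

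For non-simple $\gamma_{1}$, it remains to bound the number of admissible $\gamma_{2}$. We use $|A_{\gamma_{1}}\bar A_{\gamma_{2}}|\le(d-1)^{-n}$ and $\eta^{2}l^{2}|\hat h|^{2}=O(1)$. The count of $\gamma_{2}$ with the same edge multiset as $\gamma_{1}$ is at most polynomial in $n$, say $\lesssim n^{k}$ for a bounded $k$, because a non-simple primitive geodesic typically has only a bounded number of self-intersections. Combining this with Lemma~\ref{lem:non-cycles}, the contribution is $\lesssim\sum_{n\lesssim1/\eta}\bigl(n^{2+k}/N+n^{k}\,d(n)(d-1)^{-n/2}\bigr)$. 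The second sum is $O(\eta^{2})$ after the $\eta^{2}$ prefactor is kept. The first sum needs $k\le0$ effectively: with the prefactor $\eta^{2}$ and $l_{e}\lesssim1$, the first term gives $\eta^{2}\sum_{n\lesssim1/\eta}n\cdot n^{2}/N\sim 1/(N\eta^{2})$, provided the number of $\gamma_{2}$ is $O(n)$.

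The main obstacle is justifying that the number of $\gamma_{2}$ is $O(n)$ on average. This holds when $\gamma_{1}$ is a figure-eight-type walk with $O(1)$ self-intersections, where reroutings are bounded in number. Walks with many self-intersections have to be controlled separately, via the configuration-model estimate behind Lemma~\ref{lem:non-cycles}, which gives an extra $(n^{2}/N)^{j}$ for $j$ intersections. That extra factor compensates for the growth in reroutings.
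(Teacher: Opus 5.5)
Your route is the paper's. You split $\mc{W}^{(d)}_n$ by whether $\gamma_1$ is simple. For cycles you use the $2n$ starts and orientations, symmetry and equi-transmission to get $A_{\gamma_1}\bar A_{\gamma_2}=(d-1)^{-n}$, insert Lemma~\ref{lem:cycles}, concentrate $l_C$ around $n\bar l$, and pass to a Riemann sum. For non-simple $\gamma_1$ you invoke Lemma~\ref{lem:non-cycles}.

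Your cycle part is slightly cleaner than the paper's. Summing the first-edge weights exactly to $4l_C^2$ and using the additive bound $|\mbb{E}F(l_C\eta)-F(n\bar l\eta)|\lesssim\eta\sqrt n$ gives the $\eta^{1/2}$ error directly. The paper instead uses Hoeffding to claim the multiplicative form $f(n)=[1+O(\eta^{1/2})]\,l^2|\hat h(nl\eta)|^2$, which fails at $n$ where $\hat h(nl\eta)$ vanishes but $f(n)$ does not.

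The obstacle you flag in the non-simple part is a genuine hole, and the paper does not close it either. Its bound $|\mc{V}^{(d,ns)}_{\lambda,\eta}(h)|\le C\eta\sum_n n\eta|\hat h(nl\eta)|^2\bigl(n^2/N+(d-1)^{-n/2}\bigr)$ tacitly assumes
$\mbb{E}\sum_{\gamma_1\in\mc{W}_{n,NS}}\#\{\gamma_2:(\gamma_1,\gamma_2)\in\mc{W}^{(d)}_n\}$ is at most about $n$ times the bound of Lemma~\ref{lem:non-cycles}. That lemma, as stated, only counts the $\gamma_1$ and says nothing about their partners. Your count is fine for walks with boundedly many self-intersections; a figure-eight, for instance, has at most $4n$ partners. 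Walks with many revisits need what you sketch: a version of Lemma~\ref{lem:non-cycles} stratified by the number $j$ of self-intersections, gaining roughly $(Cn^2/N)^j$, balanced against the growth in the number of re-routings. Neither your proposal nor the paper carries this out, so this part of the lemma remains unproved in both.

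Finally, your identity $4\int_0^\infty x|\hat h(x)|^2\diff x=2\int_{-\infty}^{\infty}x|\hat h(x)|^2\diff x$ is false as written. For real $h$, $|\hat h|$ is even, so the right side is the integral of an odd function and vanishes; the correct form has $|x|$. This typo is already in the statement. The paper's proof ends with $4\int_0^\infty$, consistent with Theorem~\ref{thm1}.
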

\begin{proof}
We split the diagonal term into pairs of simple (s) and non-simple (ns) closed geodesics:
\begin{align}
	\mc{V}_{\lambda,\eta}^{(d)}(h)&=\mc{V}_{\lambda,\eta}^{(d,s)}(h)+\mc{V}_{\lambda,\eta}^{(d,ns)}(h).
\end{align}
If $\gamma_{1}$ is simple, then $\gamma_{2}$ must be a cyclic permutation and/or reversal of $\gamma_{1}$. For a given cycle, there are $4n^{2}$ such pairs of closed paths: $n$ choices of starting point for each cycle and 2 choices of relative orientation. The contribution from these pairs is
\begin{align}
    \mc{V}_{\lambda,\eta}^{(d,s)}(h)&=8\eta^{2}\sum_{n=1}^{\infty}\frac{n^{2}}{(d-1)^{n}}\mbb{E}\sum_{\gamma\in \mc{C}_{n}}l_{e_{1}(\gamma_{1})}l_{e_{1}(\gamma_{2})}\bigl|\hat{h}(l_{\gamma}\eta)\bigr|^{2}.\label{eq:cycle}
\end{align}
Here we used the fact that $A_{\gamma_{1}}\bar{A}_{\gamma_{2}}=(d-1)^{-n}$ for each such pair since the vertex unitaries are symmetric and equi-transmitting. Since the lengths are iid and each appears exactly once in a cycle, the expectation over the lengths depends only on the topological length of the path:
\begin{align}
    \mbb{E}_{l}l_{e_{1}(\gamma_{1})}l_{e_{1}(\gamma_{2})}\left|\hat{h}(l_{\gamma}\eta)\right|^{2}&=f(n).
\end{align}
For a fixed $n$, this quantity depends on the distribution of lengths and is therefore not universal. In the mesoscopic regime, however, the contribution from lengths $n\leq \eta^{-1+\epsilon}$ is $O(\eta^{2\epsilon})$ for any $\epsilon>0$. The only surviving terms come from paths with lengths growing with $\eta^{-1}$. For these we can make use of Hoeffding's inequality:
\begin{align}
    \mbb{P}\left(|l_{\gamma}-nl|\geq t\right)&\leq2\exp\left(-\frac{2t^{2}}{n(l_{max}-l_{min})^{2}}\right),
\end{align}
where $l$ is the mean length and $l_{min},\,l_{max}$ the edges of the support. Thus we have
\begin{align}
    l_{\gamma}&=nl+O\left(n^{1/2+\epsilon}\right),
\end{align}
with probability at least $1-e^{-cn^{2\epsilon}}$ and hence
\begin{align}
    f(n)&=\bigl[1+O(\eta^{1/2})\bigr]l^{2}\bigl|\hat{h}(nl\eta)\bigr|^{2},
\end{align}
where $l=\mbb{E}l_{e}$ is the mean length. Inserting this into \eqref{eq:cycle} we obtain
\begin{align}
    \bigl[1+O(\eta^{1/2})\bigr]\times8\eta^{2}\sum_{n=1}^{\infty}\frac{n^{2}l^{2}}{(d-1)^{n}}\bigl|\hat{h}(nl\eta)\bigr|^{2}\mbb{E}|C_{n}|.
\end{align}
By Lemma \ref{lem:cycles} we have
\begin{align}
    \mbb{E}|\mc{C}_{n}|&=\left[1+O\left(\frac{n(d+n)}{N}\right)\right]\frac{(d-1)^{n}}{2n}.
\end{align}
Thus we find
\begin{align}
    \mc{V}_{\lambda,\eta}^{(d,s)}(h)&=4l^{2}\eta\sum_{n=1}^{\infty}n\eta\bigl|\hat{h}(nl\eta)\bigr|^{2}+O\left(\eta^{1/2}+\frac{n(d+n)}{N}\right).
\end{align}
Approximating the sum by a Riemann integral we obtain
\begin{align}
    \mc{V}_{\lambda,\eta}^{(d,s)}(h)&=4\int_{0}^{\infty}x|\hat{h}(x)|^{2}\diff x+O\left(\eta^{1/2}+\frac{n(d+n)}{N}\right).
\end{align}

By Lemma \ref{lem:non-cycles}, we can bound $\mc{V}^{(d,ns)}$ by
\begin{align}
	|\mc{V}^{(d,ns)}_{\lambda,\eta}(h)|&\leq C\eta\sum_{n\geq1}n\eta\bigl|\hat{h}(nl\eta)\bigr|^{2}\Biggl(\frac{n^{2}}{N}+\frac{1}{(d-1)^{n/2}}\Biggr)\leq\frac{C}{N\eta^{2}}.
\end{align}
Combined with the estimate for $\mc{V}^{(d,s)}_{\lambda,\eta}(h)$, we obtain \eqref{eq:diagonal}.
\end{proof}

\begin{lemma}\label{lem:off-diagonal}
We have
\begin{align}
	|\mc{V}^{(o)}_{\lambda,\eta}(h)|&\leq\frac{C}{N\eta^{3}}.
\end{align}
\end{lemma}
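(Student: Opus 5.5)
Taking absolute values throughout and using only the deterministic bound on amplitudes, I will bound $\mc{V}^{(o)}_{\lambda,\eta}(h)$ by a count of overlapping pairs of closed geodesics. Since $\hat h$ has compact support and edge lengths lie in a compact subset of $\mbb{R}_+$, only walks with $n_1,n_2\leq C\eta^{-1}$ contribute. The prefactors $l_{e_1}l_{e_1'}\hat h\bar{\hat h}$ are uniformly bounded. Because the vertex unitaries are equi-transmitting, $|A_{\gamma_1}\bar A_{\gamma_2}|\leq(d-1)^{-(n_1+n_2)/2}$, and $A_\gamma=0$ whenever $\gamma$ backtracks, so only geodesics enter. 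This gives
\begin{align}
|\mc{V}^{(o)}_{\lambda,\eta}(h)|\lesssim\eta^{2}\sum_{n_1,n_2\leq C/\eta}(d-1)^{-(n_1+n_2)/2}\,\Bigl|\mbb{E}\!\!\sum_{(\gamma_1,\gamma_2)\in\mc{W}^{(o)}(n_1,n_2)}\!\!\Phi(\gamma_1,\gamma_2)A_{\gamma_1}\bar A_{\gamma_2}\Bigr|,
\end{align}
where $\Phi$ collects the length-dependent factors and the phases $e^{i\lambda(l_{\gamma_1}-l_{\gamma_2})}$. The key point is that the expectation over the i.i.d.\ lengths should kill, or heavily damp, pairs that do not share edges.

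I first average over the lengths, as in the proof of Lemma~\ref{lem:expectation}. An edge traversed a different number of times by $\gamma_1$ and $\gamma_2$ carries a net phase $e^{i\lambda k l_e}$ with $k\neq0$. Integrating by parts in $l_e$ gains a factor $c_\mu/\lambda$ per such edge. Since $\lambda>c_\mu\sqrt{d-1}$, each unmatched step costs at least $(d-1)^{-1/2}$ beyond the amplitude factor, which is exactly what is needed to offset the $(d-1)^{1/2}$ per step coming from the number of geodesics. Pairs in $\mc{W}^{(o)}$ therefore split into two classes. In the first, the multisets of edges differ, so some edge is unmatched and we get decay. In the second, the walks traverse identical multisets of edges; this forces $n_1=n_2$, and by the definition of $\mc{W}^{(o)}$ such pairs cannot occur, since they are precisely $\mc{W}^{(d)}_n$. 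I would organise the first class by the number $s$ of common steps and the number $p$ of common segments. The non-common steps number $n_1+n_2-2s$, and each carries the factor $(c_\mu/(\lambda\sqrt{d-1}))$, so on the geometric side they contribute at most $\rho^{\,n_1+n_2-2s}$ with $\rho<1$.

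Next I insert Lemma~\ref{lem:pairs}. The pair count $(d-1)^{n_1+n_2-s}$ times the amplitude $(d-1)^{-(n_1+n_2)/2}$ times the length decay should be summable in $n_1,n_2$ for fixed $s$. Pairs with $p=0$, which share no edges, are fully unmatched. Their count is $N^{0}$ times $(d-1)^{n_1+n_2}$, because the closing factor $m_1m_2/N^2$ is absorbed into the $N^2$ starting points. Combined with the decay $\rho^{n_1+n_2}(d-1)^{-(n_1+n_2)/2}\cdot(d-1)^{(n_1+n_2)/2}$, this gives a convergent geometric series, so these pairs contribute $O(\eta^2)$, similar to the expectation. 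For $p\geq1$, Lemma~\ref{lem:pairs} supplies the factor $(dn/N)^p\binom{n_1}{p}\binom{n_2}{p}2^pp!\binom{p+s-1}{p-1}$ with $n\leq C/\eta$. The worst case is $p=1$, which gives roughly $n^{3}/N\lesssim \eta^{-3}/N$ after summing over the segment positions and the common length $s\le n$. The remaining geometric sums converge, and together with the prefactor $\eta^2$ and a sum over one free length of order $\eta^{-1}$ this yields $C/(N\eta^{3})$. Higher $p$ are smaller by powers of $n^3/N\ll1$ in the regime where the bound is nontrivial.

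The main obstacle is the second step. It requires showing rigorously that length averaging gives a gain of $c_\mu/\lambda$ for each unmatched step, uniformly over walks with multiplicities, and not merely for the whole walk as in Lemma~\ref{lem:expectation}. The difficulty is that $\hat h(l_\gamma\eta)$ couples all the lengths together. I would first try writing $\hat h(l\eta)$ via its Fourier representation, which turns the coupling into an integral over a product of independent one-edge integrals. Each unmatched edge then has a characteristic function $\hat\mu(k\lambda+\cdot)=O(1/\lambda)$, because $\mu$ is absolutely continuous with a smooth enough density.
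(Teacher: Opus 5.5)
Your outline is the paper's proof. It uses the same splitting of the off-diagonal pairs into the classes $\mc{W}(n_1,n_2,s,p)$ and the same gain of $c_\mu/\lambda$ per unmatched step from averaging over the lengths. It inserts Lemma~\ref{lem:pairs} in the same way and sums over $n_1,n_2\lesssim\eta^{-1}$ with $p=1$ dominant. The step you single out as the main obstacle is the one the paper asserts without proof: that $\mbb{E}_{\mbf{l}}$ produces $(c_\mu/\lambda)^{n_1+n_2-2s}$ uniformly for $n_1,n_2\lesssim\eta^{-1}$. That step is a genuine gap, and your Fourier remedy does not close it. In $\hat h(\eta l_\gamma)=\int h(x)e^{i\eta x l_\gamma}\diff x$ the variable $x$ runs over all of $\mbb{R}$, since $h$ is not compactly supported. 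The argument of each one-edge characteristic function is $k_e\lambda$ shifted by $\eta$ times a combination of $x$, $y$ and the multiplicities. So these factors are $O(1/\lambda)$ only away from $\eta x\approx-\lambda$ and $\eta y\approx-\lambda$, and near those points they all have modulus close to $1$. The only smallness left there is the decay of $h$ at distance $\asymp\lambda/\eta$. Because $\hat h$ has compact support, that decay is slower than any exponential. The absolute count it must beat, $(d-1)^{(n_1+n_2)/2-s}$, is $e^{c/\eta}$ when $n_i\asymp\eta^{-1}$.

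This is not an artefact of the bounding. For a sum $S_n$ of $n$ i.i.d.\ lengths one has exactly $\mbb{E}[e^{i\lambda S_n}\hat h(\eta S_n)]=\int h(x)\hat\mu(\lambda+\eta x)^n\diff x$. Weighting by $(d-1)^{n/2}$ and summing over $n$ brings in the weighted renewal density $\sum_n(d-1)^{n/2}p_n(s)=c_0e^{\beta s}+O(e^{(\beta-\delta)s})$, say for a smooth density. Here $\beta>0$ solves $\sqrt{d-1}\int e^{-\beta l}\diff\mu(l)=1$. The main term of $\sum_n(d-1)^{n/2}\mbb{E}[e^{i\lambda S_n}\hat h(\eta S_n)]$ is then $c_0\eta^{-1}\int_0^A\hat h(\xi)e^{(\beta+i\lambda)\xi/\eta}\diff\xi$, with $A=\sup(\mathrm{supp}\,\hat h\cap\mbb{R}_+)$. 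By the Paley--Wiener--P\'olya description of the indicator of this entire function, its modulus is $e^{(A\beta-o(1))/\eta}$ along a sequence $\eta\to0$. A bound $K(c/\lambda)^n$ with $c\sqrt{d-1}<\lambda$ would instead make the sum $O(1)$. So the claimed per-step gain fails, already for the $p=0$ pairs. No argument that puts absolute values on $A_{\gamma_1}\bar A_{\gamma_2}$ can give a bound uniform in $\eta$. What is missing is an input exploiting cancellation among amplitudes, e.g.\ unitarity of the bond matrix $S$, which gives $|\sum_{\gamma\in\mc{W}_n}A_\gamma|=|\tr S^n|\le 2|E|$.

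There are three smaller points. First, an edge of net multiplicity $k$ yields one factor $O(1/(k\lambda))$ rather than $(c/\lambda)^k$, so per-edge and per-step gains differ once unmatched steps repeat edges. Second, mere absolute continuity of $\mu$ gives no rate at all. Third, your $p=0$ estimate $O(\eta^2)$ is not $\le C/(N\eta^3)$ unless $N\eta^5\lesssim1$; the paper's sum, which includes $s=p=0$, has the same defect.
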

\begin{proof}
The set of off-diagonal pairs can be decomposed into $\mc{W}(n_{1},n_{2},s,p)$ for $0\leq s\leq n_{1}\wedge n_{2}$ and $1\wedge s \leq p\leq s$. Consider a term in $\mc{V}^{(o)}_{\lambda,\eta}(h)$ corresponding to a pair in $\mc{W}(n_{1},n_{2},s,p)$. The dependence of such a term on $\mbf{l}$ is
\begin{align}
	l_{e_{1}(\gamma_{1})}l_{e_{1}(\gamma_{2})}e^{i\lambda(l_{\gamma_{1}}-l_{\gamma_{2}})}\hat{h}(l_{\gamma_{1}}\eta)\bar{\hat{h}}(l_{\gamma_{2}}\eta).
\end{align}
Since $l_{\gamma_{1}}-l_{\gamma_{2}}$ is a sum of $n_{1}+n_{2}-2s$ lengths, the expectation over $\mbf{l}$ contributes a factor of $(c_{\mu}/\lambda)^{n_{1}+n_{2}-2s}$, leading to the bound
\begin{align}
	|\mc{V}^{(o)}_{\lambda,\eta}(h)|&\leq C\eta^{2}\sum_{n_{1},n_{2}\leq\eta^{-1}}\sum_{s=0}^{n_{1}\wedge n_{2}}\sum_{p=1\wedge s}^{s}\frac{c_{\mu}^{n_{1}+n_{2}-2s}\mbb{E}|\mc{W}(n_{1},n_{2},s,p)|}{(d-1)^{(n_{1}+n_{2})/2}\lambda^{n_{1}+n_{2}-2s}}.
\end{align}
Inserting the bound from Lemma \ref{lem:pairs} we find
\begin{align}
	|\mc{V}^{(o)}_{\lambda,\eta}(h)|&\leq C\eta^{2}\sum_{n_{1},n_{2}\leq\eta^{-1}}\sum_{s=0}^{n_{1}\wedge n_{2}}\sum_{p=1\wedge s}^{s}\Biggl(\frac{(n_{1}\vee n_{2})^{3}}{N}\Biggr)^{p}\cdot\Biggl(\frac{c_{\mu}(d-1)}{\lambda^{2}}\Biggr)^{(n_{1}+n_{2})/2-s}\\
	&\leq\frac{C}{N\eta^{3}}.
\end{align}

\end{proof}

Theorem \ref{thm1} follows immediately from Lemmas \ref{lem:diagonal} and \ref{lem:off-diagonal}.

\section{Proof of Theorem \ref{thm2}}\label{s:Proof2}
Before we begin, we observe that the choice of ordering $f_{v}$ is immaterial and we will therefore use the abuse of notation $f_{v}(w)=w$, i.e. whenever $w$ appears as an index of a unitary $U^{(v)}$ it is understood to mean the ordered version $f_{v}(w)$.

Since the expectation of a polynomial in $U^{(i)}$ and $U^{(i)}$ is nonzero only if the degrees of $U^{(i)}$ and $U^{(i)}$ are equal, we immediately obtain
\begin{align}
    \mbb{E}\mc{L}_{\lambda,\eta,h}(\mc{G})&=\hat{h}(0)L,
\end{align}
and
\begin{align}
    \mc{V}_{\lambda,\eta}(h)&=2\eta^{2}\sum_{n=1}^{\infty}\sum_{(\gamma_{1},\gamma_{2})\in\mc{W}_{n}\times\mc{W}_{n}}l_{i_{1},i_{2}}l_{j_{1},j_{2}}e^{i\lambda(l_{\gamma_{1}}-l_{\gamma_{2}})}\hat{h}\left(l_{\gamma_{1}}\eta\right)\bar{\hat{h}}\left(l_{\gamma_{2}}\eta\right)\mbb{E}A_{\gamma_{1}}\bar{A}_{\gamma_{2}}.
\end{align}

We can further simplify the double sum by noting that, for a Haar unitary $V$, 
\begin{align}
    \mbb{E}V_{x_{1},y_{1}}\cdots V_{x_{m},y_{m}}\bar{V}_{u_{1},v_{1}}\cdots\bar{V}_{u_{m},v_{m}}
\end{align}
is nonzero only if $(u_{1},...,u_{m})$ is a permutation of $(x_{1},...,x_{m})$ and $(v_{1},...,v_{m})$ is a permutation of $(y_{1},...,y_{m})$. This means that $\gamma_{1}$ and $\gamma_{2}$ must traverse the same set of edges with the same multiplicities (but possibly in a different order). In particular, we have $l_{\gamma_{1}}=l_{\gamma_{2}}$. If we define $\mc{W}(\gamma)$ to be the set of closed paths $\gamma'$ for which $\mbb{E}A_{\gamma}\bar{A}_{\gamma'}$ is nonzero, we can write
\begin{align}
    \mc{V}_{\lambda,\eta}(h)&=2\eta^{2}\sum_{n=1}^{\infty}\sum_{\gamma_{1}\in\mc{W}_{n}}l_{i_{1},i_{2}}\left|\hat{h}\left(l_{\gamma_{1}}\eta\right)\right|^{2}\sum_{\gamma_{2}\in\mc{W}(\gamma_{1})}l_{j_{1},j_{2}}\mbb{E}A_{\gamma_{1}}\bar{A}_{\gamma_{2}}.\label{eq:simplified}
\end{align}

We will group closed paths $\gamma_{1}$ according to the number $v$ of distinct indices and their multiplicities $m_{j},\,j=1,...,v$, where the return to the starting vertex is \textit{not} counted. In general we have
\begin{align}
    \sum_{j=1}^{v}m_{j}=n,\qquad m_{j}\geq1,
\end{align}
i.e. $\mbf{m}=(m_{1},...,m_{v})$ is a partition of $n$ into $v$ parts. The closed path is a cycle if and only if $v=n$ and $m_{j}=1,\,j=1,...,v$. Let $\mc{N}(n,v,\mbf{m})$ denote the set of closed paths in $K_{N+1}$ with prescribed $(n,v,\mbf{m})$. Once $v$ vertices have been chosen from $N+1$, we enumerate them in order of appearance in the closed path. This gives a word $w=w_{1}w_{2}\cdots w_{n}$ of length $n$ with letters in $[v]$ such that
\begin{align}
    w_{j}&\leq\max_{i<j}w_{i}+1,\\
    w_{j+1}&\neq w_{j}.\label{eq:cond2}
\end{align}
The first condition is precisely the definition of restricted growth functions (RGF), while the second arises because we do not have self loops in $K_{N+1}$. Thus we have
\begin{align}
    |\mc{N}(n,v,\mbf{m})|&=(N+1)_{v}|\mc{R}(n,v,\mbf{m})|,
\end{align}
where $(N+1)_{v}$ is the falling factorial and $\mc{R}(n,v,\mbf{m})$ is the set of RGFs satisfying \eqref{eq:cond2} and with letter multiplicities $\mbf{m}$.

It is well known that RGFs are in bijection with set partitions of $[n]$, which can be seen as follows. For a given partition, order the blocks $B_{1},...,B_{v}$ such that $1\in B_{1}$ and the first element of $B_{j}$ is the smallest integer that has not appeared in $B_{i},\,i<j$. Construct an RGF $w=w_{1}\cdots w_{n}$ by setting $w_{j}$ equal to the index of the block to which $j$ belongs. Conversely, given $w=w_{1}\cdots w_{n}$, group the indices of equal letters $w_{j}$ into blocks to construct a partition. In the context of set partitions, the condition \eqref{eq:cond2} requires that no block contains consecutive integers.

Clearly, $|\mc{R}(n,v,\mbf{m})|$ is upper bounded by the number obtained after removing the condition \eqref{eq:cond2}, which is simply the number of partitions of $[n]$ into blocks of sizes $m_{1},...,m_{k}$. If we define
\begin{align}
    r_{k}&:=|\{j:m_{j}=k\}|,
\end{align}
then we have
\begin{align}
    |\mc{N}(n,v,\mbf{m})|&\leq\frac{(N+1)_{v}}{\prod_{k\geq1} r_{k}!}\begin{pmatrix}n\\m_{1},...,m_{k}\end{pmatrix}.\label{eq:NBound}
\end{align}
We remark that if we do not specify $\mbf{m}$ and merely count the number of RGFs satisfying \eqref{eq:cond2}, then this is exactly $S(n-1,v-1)$ where $S(m,k)$ is the Stirling number of the second kind \cite[Eq. (3)]{munagi_combinatorial_2016}.

We now turn to the factor $\mbb{E}A_{\gamma_{1}}\bar{A}_{\gamma_{2}}$:
\begin{align}
    \mbb{E}A_{\gamma_{1}}\bar{A}_{\gamma_{2}}&=\mbb{E}U^{(i_{1})}_{i_{n},i_{2}}\cdots U^{(i_{n})}_{i_{n-1},i_{1}}\bar{U}^{(j_{1})}_{j_{n},j_{2}}\cdots\bar{U}^{(j_{n})}_{j_{n-1},j_{1}}.
\end{align}
As discussed above, $\gamma_{2}$ must correspond to the same multiset of edges as $\gamma_{1}$, and in particular the same number of distinct vertices $v$ and multiplicities $\mbf{m}$. Since the $U^{(i)}$ are mutually independent, the expectation factorises into a product of $v$ separate expectations of the form
\begin{align}
    \mbb{E}U^{(i)}_{x_{1},y_{1}}\cdots U^{(i)}_{x_{m_{j}},y_{m_{j}}}\bar{U}^{(i)}_{u_{1},v_{1}}\cdots\bar{U}^{(i)}_{u_{m_{j}},v_{m_{j}}}.
\end{align}
By H\"{o}lder's inequality, any such expectation is bounded above by
\begin{align}
    \mbb{E}|U^{(i)}_{x_{1},y_{1}}|^{2m_{j}}&=\frac{N^{m_{j}}(N-1)!}{(N+m_{j}-1)!}\cdot\frac{m_{j}!}{N^{m_{j}}}\leq\frac{m_{j}!}{N^{m_{j}}}.
\end{align}
Using this crude bound we obtain
\begin{align}
    |\mbb{E}A_{\gamma_{1}}\bar{A}_{\gamma_{2}}|&\leq\prod_{j=1}^{v}\frac{m_{j}!}{N^{m_{j}}}=\frac{\prod_{j=1}^{v}m_{j}!}{N^{n}}.\label{eq:ABound}
\end{align}

Let us assume for the moment that all lengths are equal to $l$. Then \eqref{eq:simplified} becomes
\begin{align}
    \mc{V}_{\lambda,\eta}(h)&=2l^{2}\eta^{2}\sum_{n=1}^{\infty}\left|\hat{h}\left(nl\eta\right)\right|^{2}\sum_{v=2}^{n}\sum_{\mbf{m}\in\mc{P}(n,v)}\sum_{\gamma_{1}\in\mc{N}(n,v,\mbf{m})}\sum_{\gamma_{2}\in\mc{W}(\gamma_{1})}\mbb{E}A_{\gamma_{1}}\bar{A}_{\gamma_{2}},
\end{align}
where $\mc{P}(n,v)$ is the set of integer partitions of $n$ into $v$ parts. We call the contribution from the summand with $v=n$ the \textit{dominant} contribution and the rest the \textit{sub-leading} contribution, and denote these by $\mc{V}^{(d)}_{\lambda,\eta}(h)$ and $\mc{V}^{(s)}_{\lambda,\eta}(h)$ respectively. Using the bound \eqref{eq:ABound} we obtain
\begin{align}
    |\mc{V}^{(s)}_{\lambda,\eta}(h)|&\leq2l^{2}\eta^{2}\sum_{n=1}^{\infty}\left|\hat{h}\left(nl\eta\right)\right|^{2}\sum_{v=2}^{n-1}\sum_{\mbf{m}\in\mc{P}(n,v)}\sum_{\gamma_{1}\in\mc{N}(n,v,\mbf{m})}\frac{|\mc{W}(\gamma_{1})|}{N^{n}}\prod_{j=1}^{v}m_{j}!.\label{eq:Vo1}
\end{align}
We bound $|\mc{W}(\gamma_{1})|$ as follows. A closed path $\gamma_{1}$ with $v$ vertices and vertex multiplicities $\mbf{m}$ naturally induces a directed graph, which we turn into a directed multigraph by splitting each edge into multiple edges according to the number of times it is traversed by the closed path. The in- and out-degree of vertex $j$ in the multigraph are both equal to $m_{j}$. The original closed path is then identified with an Eulerian cycle on this multigraph, i.e. a cycle that traverses each edge exactly once. For $\gamma_{2}$ to belong to $\mc{W}(\gamma_{1})$, it must (at the very least) give rise to the same multigraph. Thus $|\mc{W}(\gamma_{1})|$ is bounded above by the number of Eulerian cyles on a directed multigraph with $v$ vertices having in- and out-degrees $m_{j},\,j=1,...,v$. The latter number is bounded above by
\begin{align}
    v\prod_{j=1}^{v}m_{j}!,\label{eq:Eulerian}
\end{align}
since we have $v$ choices of starting vertex and $m_{j}!$ ways to match the edges ending at $j$ to those beginning at $j$. 

Let us now impose the condition $\eta\geq N^{-1/2}$. Since the support of $\hat{h}$ is compact, this truncates the outermost sum to $n\leq C\sqrt{N}$. In this region, since $v\leq n$ we have $(N+1)_{v}\leq CN^{v}$ and so using the bounds in \eqref{eq:NBound}, \eqref{eq:ABound} and \eqref{eq:Eulerian}, we find
\begin{align}
    |\mc{V}^{(s)}_{\lambda,\eta}(h)|&\leq Cl\eta\sum_{n=1}^{\infty}nl\eta\left|\hat{h}\left(nl\eta\right)\right|^{2}\sum_{v=2}^{n-1}\sum_{\mbf{m}\in\mc{P}(n,v)}\frac{n!\prod_{j=1}^{v}m_{j}!}{N^{n-v}\prod_{k\geq1}r_{k}!}\label{eq:Vo2}
\end{align}
We rewrite \eqref{eq:Vo2} as follows
\begin{align}
    |\mc{V}^{(s)}_{\lambda,\eta}(h)|&\leq nl\eta\sum_{n=1}^{\infty}nl\eta\left|\hat{h}\left(nl\eta\right)\right|^{2}\left(\frac{n!}{N^{n}}S_{N,n}-1-\frac{(n!)^{2}}{N^{n-1}}\right),
\end{align}
where
\begin{align}
    S_{N,n}&:=\sum_{v=1}^{n}N^{v}\sum_{\mbf{m}\in\mc{P}(n,v)}\prod_{k\geq1}\frac{(k!)^{r_{k}}}{r_{k}!}.
\end{align}
We have subtracted $n!/N$ and $n!N$ from $S_{N,n}$ since these correspond to the summands with $v=n$ and $v=1$ respectively.

Observe that
\begin{align}
    \sum_{r_{1}+\cdots+nr_{n}=n}\prod_{k=1}^{n}f(k,r_{k})&=[z^{n}]\prod_{k=1}^{n}\sum_{r=0}^{\infty}f(k,r)z^{kr}.
\end{align}
In our case we have
\begin{align}
    S_{N,n}&=\sum_{r_{1}+\cdots+nr_{n}=n}\prod_{k=1}^{n}f(k,r_{k}),
\end{align}
with
\begin{align}
    f(k,r)&=\frac{(k!N)^{r}}{r!}.
\end{align}
Thus
\begin{align}
    S_{N,n}&=[z^{n}]\prod_{k=1}^{n}\sum_{r=0}^{\infty}\frac{(k!Nz^{k})^{r}}{r!}=[z^{n}]\prod_{k=1}^{n}e^{k!Nz^{k}}=[z^{n}]e^{NP_{n}(z)},
\end{align}
where
\begin{align}
    P_{n}(z)&:=\sum_{k=1}^{n}k!z^{k}.
\end{align}
By Cauchy's integral formula, for any $r>0$ we have
\begin{align}
    S_{N,n}&=\frac{1}{r^{n}}\int_{-\pi}^{\pi}e^{NP_{n}(re^{i\theta})-in\theta}\frac{\diff \theta}{2\pi}.
\end{align}
Let $r=r_{n}$ solve
\begin{align}
    rP'_{n}(r)&=\frac{n}{N},
\end{align}
or equivalently
\begin{align}
    \sum_{k=1}^{n}k!kr^{k}&=\frac{n}{N}.
\end{align}
Since the coefficients are positive, there is a unique solution in the interval $[0,n/N]$. Since 
\begin{align}
    \frac{(k+1)!(k+1)(n/N)^{k+1}}{k!k(n/N)^{k}}&=\frac{(k+1)^{2}n}{kN}=O\left(\frac{n^{2}}{N}\right),
\end{align}
if $n=o(N^{1/2})$ then, for any $l=1,...,n$, we have
\begin{align}
    \sum_{k=l+1}^{n}k!kr_{n}^{k}&\leq\sum_{k=l+1}^{n}k!k\left(\frac{n}{N}\right)^{k}\\
    &\leq\frac{Cl!ln^{l}}{N^{l}}\sum_{k=l+1}^{n}\left(\frac{n^{2}}{N}\right)^{k-l}\\
    &\leq\frac{Cn^{2}}{N}\cdot\frac{l!ln^{l}}{N^{l}},
\end{align}
and hence
\begin{align}
    \sum_{k=1}^{l}k!kr_{n}^{k}+o\left(\frac{l!ln^{l}}{N^{l}}\right)&=\frac{n}{N}.
\end{align}
This implies the asymptotic expansion
\begin{align}
    r_{n}&=\frac{n}{N}-\frac{4n^{2}}{N^{2}}+O\left(\frac{n^{3}}{N^{3}}\right).
\end{align}
By Laplace's method, we obtain
\begin{align}
    S_{N,n}&=\frac{e^{NP_{n}(r_{n})}}{r_{n}^{n}\sqrt{2\pi N(r_{n}P'_{n}(r_{n})+r^{2}_{n}P''_{n}(r_{n}))}}\left[1+O\left(\frac{1}{\sqrt{n}}\right)\right]\\
    &=\frac{1}{\sqrt{2\pi n}}\left(\frac{Ne}{n}\right)^{n}\left[1+O\left(\frac{n^{2}}{N}+\frac{1}{\sqrt{n}}\right)\right].
\end{align}    
Thus, by Stirling's formula,
\begin{align}
    \frac{n!}{N^{n}}S_{N,n}&=1+O\left(\frac{n^{2}}{N}+\frac{1}{\sqrt{n}}\right).
\end{align}
Since moreover
\begin{align}
    \frac{(n!)^{2}}{N^{n-1}}&\leq\frac{2\pi e^{-2n+2}n^{2n+1}}{N^{n-1}}=O\left(\frac{n^{2}}{N}\right),
\end{align}
for $n=o(N^{1/2})$, we have obtained
\begin{align}
    |\mc{V}^{(s)}_{\lambda,\eta}(h)|&\leq Cl\eta\sum_{n=1}^{\infty}nl\eta\left|\hat{h}\left(nl\eta\right)\right|^{2}\times O\left(\frac{n^{2}}{N}+\frac{1}{\sqrt{n}}\right)\nonumber\\
    &=O\left(\frac{1}{N\eta^{2}}+\|\hat{h}'\|_{\infty}\sqrt{\eta}\right).
\end{align}
Note that if $n\gg N^{1/2}$, then one can show that
\begin{align}
    \frac{n!}{N}S_{N,n}-1-\frac{(n!)^{2}}{N^{n-1}}
\end{align}
is unbounded as $n\to\infty$. In brief, we deduce that in this case
\begin{align}
    r_{n}>\frac{1-\epsilon}{(n!N)^{1/n}}
\end{align}
for some $\epsilon\in(0,1)$, and hence $NP_{n}(r_{n})>Nr_{n}>(1-\delta)N/n$.

The dominant contribution corresponds to $v=n$, i.e. $\gamma_{1}$ is a cycle of length $n$, of which there are $(N+1)_{n}$ possible choices in $K_{N+1}$. In this case $\mc{W}(\gamma_{1})$ consists of the $n$ cyclic permutations of $\gamma_{1}$ and we have
\begin{align}
    \mbb{E}A_{\gamma_{1}}\bar{A}_{\gamma_{2}}&=\mbb{E}|U^{(i_{1})}_{i_{n},i_{2}}|^{2}\cdots|U^{(i_{n})}_{i_{n-1},i_{1}}|^{2}=\frac{1}{N^{n}}
\end{align}
for all $\gamma_{2}\in\mc{W}(\gamma_{1})$. Hence
\begin{align}
    \mc{V}^{(d)}_{\lambda,\eta}(h)&=2l\eta\sum_{n=1}^{\infty}nl\eta\left|\hat{h}\left(nl\eta\right)\right|^{2}\cdot\frac{(N+1)_{n}}{N^{n}}\\
    &=\left[1+O\left(\frac{n^{2}}{N}\right)\right]2l\eta\sum_{n=1}^{\infty}nl\eta\left|\hat{h}\left(nl\eta\right)\right|^{2}\\
    &=\left[1+O\left(\frac{n^{2}}{N}+\|\hat{h}'\|_{\infty}\sqrt{\eta}\right)\right]2\int_{0}^{\infty}x|\hat{h}(x)|^{2}\diff x,
\end{align}
where in the last line we approximated the sum by a Riemann integral.

We can repeat the same arguments in the case where the lengths are not all equal as long as they are uniformly bounded. Now the dominant contribution takes the form
\begin{align}
    \mc{V}^{(d)}_{\lambda,\eta}(h)&=\left[1+O\left(\frac{n}{N}\right)\right]2\eta\sum_{n=1}^{\infty}n\eta\cdot\frac{1}{N^{n}}\sum_{\gamma_{1}}l_{i_{1},i_{2}}l_{i_{2},i_{3}}\left|\hat{h}\left(n\bar{l}_{\gamma_{1}}\eta\right)\right|^{2},
\end{align}
where the sum is over $n$-cycles and $\bar{l}_{\gamma}$ is the mean metric length of the edges in $\gamma$. By assumption we have $\bar{l}_{\gamma}=l+o(1)$ as $n\to\infty$ and hence
\begin{align}
    \frac{1}{N^{n}}\sum_{\gamma_{1}}l_{i_{1},i_{2}}l_{i_{2},i_{3}}\left|\hat{h}\left(n\bar{l}_{\gamma_{1}}\eta\right)\right|^{2}&=\left[1+o(1)\right]\left|\hat{h}\left(nl\eta\right)\right|^{2}\frac{1}{N^{n}}\sum_{\gamma_{1}}l_{i_{1},i_{2}}l_{i_{2},i_{3}}\\
    &=\left[1+o(1)\right]l^{2}\left|\hat{h}\left(nl\eta\right)\right|^{2}.
\end{align}
We thus obtain the same result as in the equal length case but with an $o(1)$ error depending on the rate at which $\bar{l}_{\gamma}$ converges to $l$.

\section{Further Remarks and Numerics}\label{s:Remarks}

\subsection{The regime $\eta\ll N^{-1/2}$}
In the region $N^{-1}\ll \eta\ll N^{-1/2}$, the dominant contribution is no longer from cycles because the typical closed walk has at least two self-intersections. In model 2 this can be seen from the fact that falling factorial becomes exponentially small:
\begin{align}
    (N)_{v}&=N^{v}\prod_{i=0}^{v-1}\left(1-\frac{i}{N}\right)\leq N^{v}e^{-\frac{v(v-1)}{2N}}.
\end{align}
To determine the dominant contribution, we consider the identity
\begin{align}
    1&=\sum_{v=1}^{n}\frac{(N)_{v}S(n,v)}{N^{n}},
\end{align}
where we recall that $S(n,v)$ are the Stirling numbers of the second kind. If $n^{2}\ll N$, then the sum over $[1,n-1]$ is negligible and the main term corresponds to $v=n$. In general, we expect that the sum over $[1,n-Cn^{2}/N]$ is negligible for some $C>0$. To see this, we use the bound
\begin{align}
    S(n,k)&\leq\frac{1}{2}\begin{pmatrix}n\\v\end{pmatrix}v^{n-v}\leq\frac{n^{n-v}}{2\sqrt{2\pi v(1-v/n)}}\cdot\frac{(v/n)^{n-2v}}{(1-v/n)^{n-v}},
\end{align}
to obtain
\begin{align}
    \sum_{v=1}^{n-Cn^{2}/N}\frac{(N)_{v}S(n,v)}{N^{n}}&\leq C\sum_{v=1}^{n-Cn^{2}/N}\sqrt{\frac{1}{v(1-v/n)}}\cdot\frac{n^{n-v}(v/n)^{n-2v}}{N^{n-v}(1-v/n)^{n-v}}\cdot e^{-\frac{v^{2}}{2N}}\\
    &=C\sum_{x=1/n}^{1-Cn/N}\sqrt{\frac{n}{x(1-x)}}e^{-nf(x)},
\end{align}
where $x=v/n$ and
\begin{align}
    f(x)&=\frac{n}{2N}x^{2}-(1-x)\log\frac{n}{N}-(1-2x)\log x+(1-x)\log(1-x).
\end{align}
For large enough $C>0$, we can show that $f(x)>cn/N$ in $[0,1-Cn/N]$ and hence the sum is exponentially small. Thus the dominant contribution comes from closed paths of length $n$ on $v$ distinct vertices such that
\begin{align}
    \frac{n-v}{v}&\leq\frac{Cn}{N}\ll1.
\end{align}

\subsection{Single Random Unitary} \label{s:single}
Model 2 is maximally random in the sense that each vertex has an independent random unitary. In the opposite extreme, for regular graphs we can put the same random unitary at each vertex, after conjugating by the permutation matrix corresponding to the choice of ordering $f_{v}$. In the case of the complete graph with self-loops, we can choose $f_{v}=1$ identically so that there is no conjugation by permutation matrices. In this case the amplitude of a closed path $\gamma$ is
\begin{align}
    A_{\gamma}&=U_{i_{n},i_{2}}U_{i_{1},i_{3}}\cdots U_{i_{n-2},i_{n}}U_{i_{n-1},i_{1}}.
\end{align}
Consider the equilateral case $l_{e}=l$, in which we can sum over all vertices to obtain $(\tr U^{k})^{2}$ if $n=2k$ and $\tr U^{2k+1}$ if $n=2k+1$. It is known that
\begin{align}
    \mbb{E}(\tr U^{n})^{p}(\overline{\tr U^{m}})^{q}&\propto\delta_{np,mq},
\end{align}
and
\begin{align}
    \mbb{E}|\tr U^{n}|^{2p}&=[1+o(1)]p!(n\wedge N)^{p}.
\end{align}
Thus we find
\begin{align}
    \mc{V}_{\lambda,\eta}(h)&\simeq2l^{2}\eta^{2}\sum_{n=1}^{\infty}\left[2(n\wedge N)^{2}\left|\hat{h}\left(2nl\eta\right)\right|^{2}+((2n+1)\wedge N)\left|\hat{h}\left((2n+1)l\eta\right)\right|^{2}\right]\\
    &\gtrsim 4\sum_{n=1}^{\infty}n^{2}l^{2}\eta^{2}\left|\hat{h}\left(2nl\eta\right)\right|^{2}\\
    &=\Omega(\eta^{-1}),
\end{align}   
and the variance diverges linearly in $\eta^{-1}$. This is the behaviour one would expect if the zeros behaved as independent random variables (Poisson statistics).

\subsection{Numerical investigation into LSS behaviour with increasing randomness}
Here we interpolate between the model in the above Section \ref{s:single} and Model 2. We use a complete graph with self-loops and equal bond lengths. A subset of \(p\) vertices is assigned independent Haar-distributed matrices, while the remaining \(N-p\) vertices are assigned a common randomly chosen matrix \(U_0\). This makes \(p=0\)  the single random unitary regime of Section \ref{s:single}, whereas \(p=N\) is the maximally random endpoint in Model 2.  We note that the LSS grows with $N$ when $p=0$ and asymptotes toward the RMT predicted, see Figure \ref{LSSvsRandomness}.
\begin{figure}[h]
\begin{center}
 \includegraphics[width = 0.8 \textwidth]{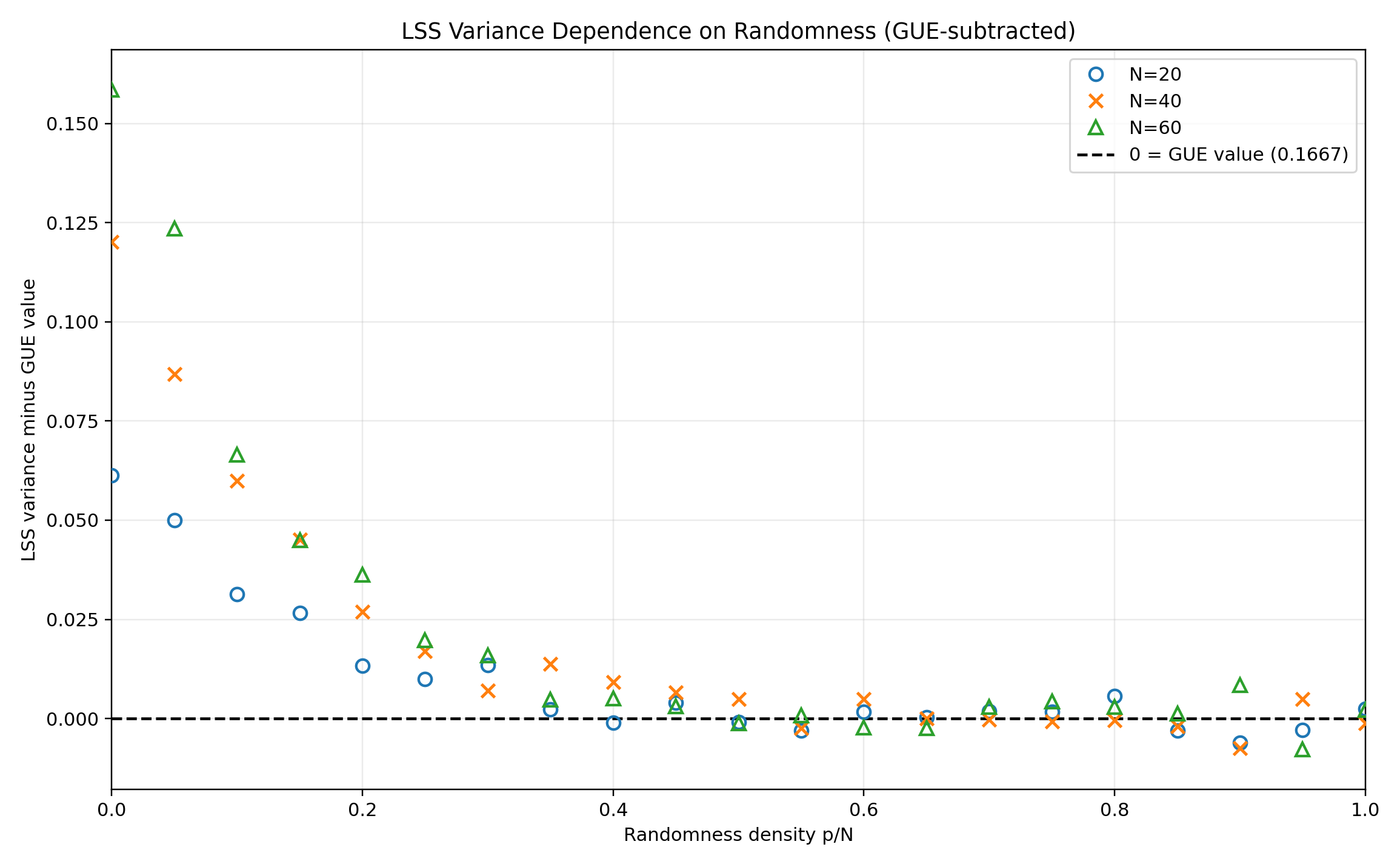} 
 \caption{LSS variance with GUE predicted variance of $1/6$ subtracted out vs proportion of vertices that have iid Haar-distributed unitary scattering matrices}\label{LSSvsRandomness}
 \end{center}
\end{figure}
The key simplification in the equilateral case is that the evolution matrix has the form \(U(k)=e^{ikl}S\), with \(S\) independent of \(k\). Hence, if \(e^{i\theta_j}\) are the eigenvalues of \(S\), then the corresponding \(k\)-eigenvalues are given exactly by \(k=(2\pi m-\theta_j)/l\) for integers \(m\), so the spectral problem is reduced to a single diagonalisation of the \(N^2\times N^2\) scattering matrix \(S\). For each value of $p/N$ we did 5000 independent samples. For each sample we generated 
the required Haar matrices, built the scattering matrix \(S\), diagonalised it, enumerated all roots in the window \(|k-\lambda|\leq u_{\mathrm{cut}}\eta\), and evaluated the truncated linear statistic
\begin{align}
	L_{\lambda,\eta}^{(u_{\mathrm{cut}})}(h)&=\sum_{|k_j-\lambda|\leq u_{\mathrm{cut}}\eta}h\!\left(W(k_j-\lambda)\right),
\end{align}
with
\begin{align}
	h(u)&=\left(\frac{\sin u}{u}\right)^2,\quad \eta=N^{-0.49},\quad  u_{cut}= 20, \quad \lambda = 50.
\end{align}

%A heuristic estimate for a crossover at the order of $p/N\eta$ can be gleaned from the trace-formula proof. Since \(\hat h\) is compactly supported, only periodic orbits of topological length of order \(W\) contribute to the variance. In the mixed model, each visit of an orbit to a vertex lands in the independently sampled subset with probability \(p/N\), so an orbit of length \(W\) encounters on average \(pW/N\) independent vertices. Thus \(pW/N\) may be viewed as the effective randomness seen by that orbit. When \(pW/N\ll 1\), most relevant orbits avoid the independent subset and the single-\(U\) reduction in the variance similar to Section \ref{s:single} is prevalent. The appearance of \((\mathrm{tr}\,U^k)^2\) and \(\mathrm{tr}\,U^{2k+1}\) lead to the larger variance of the less random regime. When \(Wp/N\gtrsim 1\), a typical relevant orbit encounters one or more independent vertices, these high powers of a single matrix are reduced, and the variance tends rapidly to the GUE prediction. 

\subsection{Random quantum graphs corresponding to a self-adjoint laplacian}
In the introduction we defined a quantum graph by assigning unitary matrix-valued functions to each vertex. Not all such functions correspond to a self-adjoint Laplacian. The self-adjoint extensions of the Laplacian can be characterised by vertex conditions relating a function and its derivatives evaluated at each vertex. Namely, for each vertex $v$ of degree $d$ there are $d\times d$ matrices $A_{v},\,B_{v}$ such that
\begin{align}
    A_{v}F+B_{v}F'&=0,\label{eq:vertexConditions}
\end{align}
where $F=(f_{1}(v),...,f_{d}(v))^{T}$ and $F'=(f'_{1}(v),...,f'_{d}(v))^{T}$ for a function $f\in\bigoplus_{e\in E} L^{2}([0,l_{e}])$. We denote by $\mc{D}$ set of $H^{2}$ functions satisfying \eqref{eq:vertexConditions}. The following well-known result establishes a necessary and sufficient condition on $(A_{v},B_{v})$ for a self-adjoint Laplacian.
\begin{theorem}[Kostrykin--Schrader \cite{kostrykin_schrader1999}]
The Laplacian is self-adjoint on $\mc{D}$ iff $A_{v}B_{v}^{}$ is self-adjoint. In this case the $S$-matrix at $v$ is given by
\begin{align}
    S^{(v)}(k)&=-(A_{v}+ikB_{v})^{-1}(A_{v}-ikB_{v}).
\end{align}
\end{theorem}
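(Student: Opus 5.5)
This is the Kostrykin--Schrader theorem, and I would prove it with the standard boundary-form argument. As stated it needs the usual maximality hypothesis: $\mathrm{rank}(A_{v},B_{v})=d_{v}$. Without it the conditions \eqref{eq:vertexConditions} at $v$ are fewer than $d_v$, and the operator cannot be self-adjoint; for example, $A_v=B_v=0$ gives $A_vB_v^{*}=0$ self-adjoint but an operator that is not symmetric. I will read the condition as ``$A_{v}B_{v}^{*}$ is self-adjoint'', which I take to be the intended meaning of $A_vB_v^{}$, and I will assume the maximal rank condition throughout.

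\emph{Boundary form.} Integrating by parts twice on each edge gives, for $f,g\in\bigoplus_e H^{2}([0,l_e])$,
\begin{align}
\langle -f'',g\rangle-\langle f,-g''\rangle&=\sum_{v}\Bigl(\langle F,G'\rangle_{\mbb{C}^{d_v}}-\langle F',G\rangle_{\mbb{C}^{d_v}}\Bigr),
\end{align}
with derivatives taken inward at each vertex. The right-hand side is a symplectic (Hermitian) form $\omega$ on $\mbb{C}^{2d_v}$ at each vertex. Since the conditions are local, the problem decouples vertex by vertex. The operator $-\Delta$ on $\mc{D}$ is self-adjoint iff at each $v$ the subspace $\mc{M}_v=\ker(A_v,B_v)\subset\mbb{C}^{2d_v}$ is \emph{Lagrangian}, i.e.\ $\mc{M}_v=\mc{M}_v^{\perp_\omega}$. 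This is the von Neumann/Glazman--Krein characterisation of self-adjoint extensions of the minimal operator, whose deficiency space is identified with the boundary values $(F,F')$ via a surjective trace map.

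\emph{Linear algebra.} By the rank condition, $\dim\mc{M}_v=d_v$. The relevant computation is that $(F,F')\in\mc{M}_v$ iff $(F,F')=(-B_v^{*}\phi,A_v^{*}\phi)$ for some $\phi$. Indeed, $\mathrm{ran}\begin{pmatrix}-B_v^{*}\\A_v^{*}\end{pmatrix}$ has dimension $d_v$, and it lies in the kernel iff $-A_vB_v^{*}+B_vA_v^{*}=0$, i.e.\ iff $A_vB_v^{*}$ is self-adjoint. Conversely, isotropy of $\mc{M}_v$ for this parametrisation is exactly the same identity. So: $A_vB_v^{*}=B_vA_v^{*}$ $\Leftrightarrow$ $\mc{M}_v$ is isotropic of dimension $d_v$ $\Leftrightarrow$ $\mc{M}_v$ is Lagrangian. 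This gives the equivalence.

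\emph{$S$-matrix.} On the edges at $v$ I use the ansatz $f_e(x)=a_e e^{-ikx}+b_e e^{ikx}$, with incoming amplitudes $a$ and outgoing amplitudes $b$. Then $F=a+b$ and $F'=ik(b-a)$, and the vertex condition reads
\begin{align}
(A_v-ikB_v)a+(A_v+ikB_v)b&=0 .
\end{align}
It remains to show $A_v+ikB_v$ is invertible for real $k\neq0$. Suppose $(A_v+ikB_v)^{*}\psi=0$, i.e.\ $A_v^{*}\psi=ik\bar{}\,B_v^{*}\psi$ with $k$ real, so $A_v^{*}\psi=ikB_v^{*}\psi$ up to conjugation. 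Pairing with $B_v^{*}\psi$ and using $A_vB_v^{*}$ self-adjoint, the quantity $\langle A_v^{*}\psi,B_v^{*}\psi\rangle$ is real but equals $\pm ik\|B_v^{*}\psi\|^{2}$. Hence $B_v^{*}\psi=0$, so $A_v^{*}\psi=0$, and the rank condition forces $\psi=0$. Therefore
\begin{align}
b&=-(A_v+ikB_v)^{-1}(A_v-ikB_v)a=S^{(v)}(k)a .
\end{align}
Unitarity of $S^{(v)}(k)$ follows from the same self-adjointness identity: $\|(A_v+ikB_v)^{*}\phi\|=\|(A_v-ikB_v)^{*}\phi\|$.

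\emph{Main obstacle.} The only nontrivial analytic step is the reduction of self-adjointness to the Lagrangian condition. This requires that the trace map $f\mapsto\bigl((F,F')_v\bigr)_v$ from $\bigoplus_e H^{2}$ onto $\bigoplus_v\mbb{C}^{2d_v}$ is surjective, and that the maximal operator has domain $\bigoplus_e H^{2}$. Both are standard for finitely many compact edges, and I would cite them rather than reprove them.
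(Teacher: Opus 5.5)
The paper gives no proof of its own (it simply cites Kostrykin--Schrader), and your argument is essentially their proof: the boundary form, the reduction to $\ker(A_v,B_v)$ being Lagrangian, then the plane-wave ansatz giving $b=-(A_v+ikB_v)^{-1}(A_v-ikB_v)a$; you also correctly amend the statement as printed by reading $A_vB_v^{*}$ and imposing $\mathrm{rank}(A_v,B_v)=d_v$. The one step to tighten is the converse in your linear algebra: the parametrisation $\mc{M}_v=\mathrm{ran}\begin{pmatrix}-B_v^{*}\\ A_v^{*}\end{pmatrix}$ was derived assuming $A_vB_v^{*}=B_vA_v^{*}$, so instead note that $\mc{M}_v^{\perp_\omega}=\mathrm{ran}\begin{pmatrix}-B_v^{*}\\ A_v^{*}\end{pmatrix}$ holds unconditionally (pairing $(F,F')$ with $(-B_v^{*}\phi,A_v^{*}\phi)$ gives $\pm\langle A_vF+B_vF',\phi\rangle$, and both sides have dimension $d_v$), so $\mc{M}_v$ Lagrangian forces this range into $\ker(A_v,B_v)$, which is exactly the identity.
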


In light of this result, it is natural to construct ensembles of random quantum graphs by taking $A_{v}$ and $B_{v}$ to be random matrices such that $A_{v}B_{v}^{}$ is self-adjoint. Restricting to invertible $B$, we can write
\begin{align}
    S^{(v)}(k)&=-(L_{v}+ik)^{-1}(L_{v}-ik),
\end{align}
where $L_{v}:=B_{v}^{-1}A_{v}$ is self-adjoint. In this case we can equivalently sample $L_{v}$ from an ensemble of $d\times d$ random self-adjoint matrices. If the distribution of $L_{v}$ is invariant under unitary conjugation, then so is the distribution of $S_{v}(k)$, which implies that the eigenvector matrix of $S_{v}(E)$ is Haar-distributed and independent of the eigenvalues.

\section{Acknowledgements}
The authors are grateful to Evans Harrell for helpful discussions. A.M. is supported by the Royal Society grant number URF$\backslash$R$\backslash$221017 and M. O. by the Royal Society grant number RF/ERE/210051.

\bibliographystyle{abbrv}

\end{document}